\definecolor{red}{gray}{0}
\definecolor{blue}{gray}{0}
\newcommand{\OPT}{\textnormal{OPT}}
\newcommand{\ALG}{\textnormal{ALG}}
\newcommand{\partition}{S}
\newcommand{\lightness}{subset-lightness\xspace} 
\title{Subsetwise and Multi-Level Additive Spanners with Lightness Guarantees}
\author{Reyan~Ahmed}{University of Arizona, Tucson, United States}{abureyanahmed@arizona.edu}{}{}
\author{Debajyoti~Mondal}{University of Saskatchewan, Canada}{d.mondal@usask.ca}{}{}
\author{Rahnuma~Islam~Nishat}{Brock University, Ontario, Canada}{rnishat@brocku.ca}{}{}
\authorrunning{Reyan~Ahmed, Debajyoti~Mondal, Rahnuma~Islam~Nishat}
\keywords{Subsetwise spanners, multi-level spanners, approximation algorithms}
\begin{document}

\nolinenumbers
\maketitle



\begin{abstract} 
An \emph{additive +$\beta W$ spanner} of an edge weighted graph  $G=(V,E)$ is a subgraph $H$ of $G$ such that for every pair of vertices $u$ and $v$, $d_{H}(u,v) \le   d_G(u,v) + \beta W$, where $d_G(u,v)$ is the shortest path length from $u$ to $v$ in $G$. 
While additive spanners are very well studied in the literature, spanners that are both additive and lightweight have been introduced more recently [Ahmed et al., WG 2021]. 
Here the \emph{lightness} is the ratio of the spanner weight to the weight of a minimum spanning tree of $G$. In this paper, we examine the widely known subsetwise setting when the distance conditions need to hold only among the pairs of a given subset $S$. We generalize the concept of lightness to \lightness using a Steiner tree and provide polynomial-time algorithms to compute subsetwise additive $+\epsilon W$ spanner and $+(4+\epsilon) W$  spanner with $O_\epsilon(|S|)$ and $O_\epsilon(|V_H|^{1/3} |S|^{1/3})$ \lightness, respectively, where 
$\epsilon$ is an arbitrary positive constant.  
We next examine a multi-level version of spanners that often arises in network visualization and modeling the quality of service requirements in communication networks. The goal here is to compute a nested sequence of spanners with the minimum total edge weight. We provide an $e$-approximation algorithm to compute multi-level spanners assuming that an oracle is given to compute single-level spanners, improving a previously known 4-approximation [Ahmed et al.,  IWOCA 2023]. 
\end{abstract}
\maketitle 

\section{Introduction}

Given a graph $G$, a spanner of $G$ is a subgraph that preserves lengths of shortest paths in $G$ up to some multiplicative or additive error~\cite{peleg1989graph,liestman1993additive}. 
A subgraph $H=(V,E'\subseteq E)$ of $G$ is called a \emph{multiplicative $\alpha$--spanner}  if the lengths of shortest paths in $G$ are preserved in $H$ up to a multiplicative factor of $\alpha$, that is, $d_H(u,v) \le \alpha \cdot d_G(u,v)$ for all $u,v \in V$, where $d_G(u,v)$ denotes the length of the shortest path from $u$ to $v$ in $G$. For being an \emph{additive +$\beta$ spanner}, $H$ must satisfy the inequality $d_H(u,v) \le d_G(u,v)+\beta$. 

\begin{figure}[pt]
    \centering
    \includegraphics[width=\linewidth]{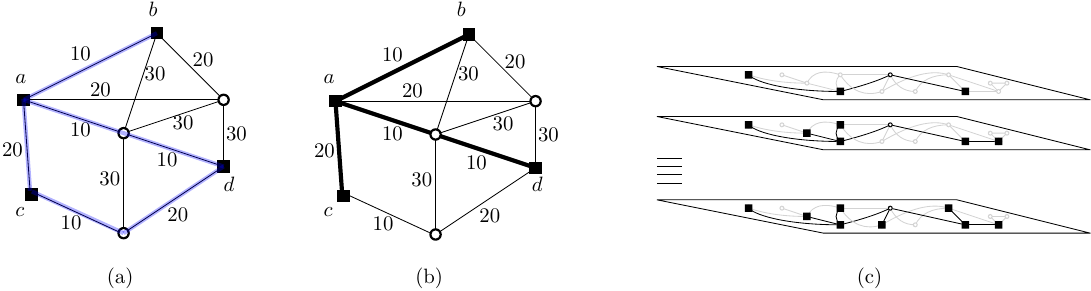}
    \caption{(a) A graph $G$, where the subset $S$ is shown in squares. A subgraph corresponding to the pairwise shortest paths determined by $S$ is highlighted in blue. Here $W_G(a,b)=W_G(a,d)=W_G(b,d)=10$ and $W_G(a,c)=W_G(b,c)=W_G(c,d)=20$. (b) A $+2W(\cdot,\cdot)$-spanner (in bold) of lightness 50/60$ = 0.83$ considering a spanning tree of the blue subgraph as the Steiner tree. (c) Illustration for a multi-level spanner.}
    \label{fig:intro}
\end{figure}

Multiplicative spanners were introduced by Peleg and Sch\"{a}ffer~\cite{peleg1989graph} in 1989. Since then a rich body of research has attempted to find sparse and lightweight  spanners on unweighted graphs. Such spanners find applications in computing graph summaries, building distributed computing models, and designing communication networks~\cite{awerbuch1985complexity,ahmed2020graph}. 
Finding a multiplicative $\alpha$--spanner with $m$ or fewer edges is NP--hard~\cite{peleg1989graph}. It is also NP--hard to find a $c \log |V|$-approximate solution where $c<1$, even for bipartite graphs~\cite{Kortsarz2001}.  
However, every $n$-vertex graph admits a  multiplicative $(2k-1)$-spanner with  $O(n^{
1+1/k})$ edges and  $O(n/k)$ lightness~\cite{Alth90}, which has been improved in subsequent research~\cite{Filtser16journal,ENS14,CW16journal,CDNS92,le2023unified}.

A multiplicative factor allows for larger distances in a spanner, especially for pairs that have larger graph distances. Therefore, an additive spanner is sometimes more desirable even with the cost of having a larger number of edges. For unweighted graphs, there exist additive $+2, +4$ and $+6$ spanners with $O(n^{3/2})$~\cite{Aingworth99fast,knudsen2014additive}, $\tilde{O}(n^{7/5})$~\cite{chechik2013new} and $O(n^{4/3})$ edges~\cite{baswana2010additive,knudsen2014additive}, respectively, whereas any additive $+n^{o(1)}$ spanners requires $\Omega(n^{4/3-\epsilon})$ edges~\cite{Abboud15}. Here $\tilde{O}()$ hides polylogarithmic factors. These results have been extended to weighted graphs with the additive factors $+\beta W_G(\cdot,\cdot)$, where $W_G(\cdot,\cdot)$ represents the maximum weight on a shortest path between $u$ and $v$ in $G$, i.e., $d_H(u,v) \le d_G(u,v)+\beta W_G(u,v)$. We use $W(u,v)$ instead of $W_G(u,v)$ if the graph $G$ is clear from context. 
Specifically, there exist additive $+2W(\cdot,\cdot), +4W(\cdot,\cdot)$ and $+(6+\epsilon)W(\cdot,\cdot)$ spanners of size  $O(n^{3/2})$~\cite{elkin2019almost}, $\tilde{O}(n^{7/5})$~\cite{ahmed2020graph} and $O_\epsilon(n^{4/3})$~\cite{elkin2023improved}, respectively, where $O_\epsilon()$ hides $poly(1/\epsilon)$ factors. Researchers have also attempted to guarantee a good lightness bound for such spanners, where the \emph{lightness} is the ratio of the spanner weight to the weight of a minimum spanning tree of $G$. So far, non-trivial lightness guarantees have been obtained only for (all-pairs) additive $+\epsilon W(\cdot,\cdot)$  and $+(4+\epsilon)W(\cdot,\cdot)$ spanners, where the lightness bounds are $O_\epsilon(n)$ and $O_\epsilon(n^{2/3})$, respectively~\cite{ahmed2021additive}.

Real-life graphs can be very large, which motivates computing a subgraph that approximately preserves the shortest path distances among a subset of important vertices. Subsetwise spanner (also known as terminal spanners~\cite{elkin2017terminal,bartal2019notions}) is another commonly studied version, 
where in addition to $G$,  the input consists of a vertex subset $S$ of $G$, known as terminals. The goal is to compute an additive or multiplicative spanner that guarantees that for every $u,v\in S$, $d_H(u,v)$ is within an additive or multiplicative factor of $d_G(u,v)$. Computing a subsetwise spanner comes with additional challenges, e.g., consider computing a minimum spanning tree vs. a minimum Steiner tree for an analogy. There exist additive $+2$ and additive  $+(2+\epsilon)W(\cdot,\cdot)$ subsetwise spanners with size $O(n\sqrt{|S|})$ and  $O_\epsilon(n\sqrt{|S|})$, respectively~\cite{ahmed2020graph}, however they do not come with any non-trivial lightness guarantees. The algorithms that construct such spanners often follow a \emph{greedy construction framework}, i.e., they construct an initial graph and then examine the pairs in sorted nondecreasing order of the maximum weight, i.e.,  $W(\cdot,\cdot)$. For each pair $(u,v)$, if it still fails to satisfy the distance condition of an additive spanner, then all the edges on the corresponding shortest path are added to the spanner. 

In this paper, we focus on computing subsetwise additive spanners that are also light with respect to a Steiner tree $T$. To this end, we define Steiner-lightness and \lightness.

{\color{blue}
\begin{definition}[Steiner-Lightness]
Let $G=(V,E)$ be a graph with terminal set $S$ and let $R$ be a minimum Steiner tree over $S$, i.e., $R$ spans all the vertices in $S$ but may include some vertices from $V\setminus S$. Then the \emph{Steiner-lightness} of a spanner in $G$ is the ratio of the spanner weight to the weight of $R$.
\end{definition}

Since a Steiner tree may not always satisfy additive shortest path distances between the pairs in $S$, we define another lightness measure as follows.
\begin{definition}[Subset-Lightness]
Let $G=(V,E)$ be a graph with terminal set $S$. Let $P \subseteq S\times S$ be the pairs that do not satisfy the spanner condition in a minimum Steiner tree with terminal set $S$. Let $T$ be a minimum Steiner tree with terminal set $(S\cup S^*)$, where $S^*$ is the set of vertices on the shortest paths of the pairs in $P$. Then the \emph{\lightness} is the ratio of the spanner weight to the weight of $T$.
\end{definition}

\smallskip
\noindent
\textbf{Our contribution.} {\color{red}A good  lightness measure should ideally be a good approximation for a minimum weight additive spanner. However, we show that neither Steiner-lightness nor  \lightness meets this property for all graphs.}
Specifically, there exist graphs where every 
$+cW(\cdot,\cdot)$ additive spanner has a Steiner-lightness of $\Omega(|S|)$ but a \lightness of $O(1)$ (Theorem~1). Similarly, there exist graphs where every minimum weight $+cW(\cdot,\cdot)$ additive spanner has a Steiner-lightness of $O(1)$ but a \lightness of $O(1/|S|)$ \todo{double chekc the minimum weight, $\Omega$ and $O$. $O(1/|S|)$ seems ok, but we might want to use $\Omega(1)$ instead of $O(1)$.}(Theorem~2).




}

We also construct additive spanners with good \lightness guarantees.  Let $\epsilon>0$ be a positive number, $G$ be a weighted graph and $S$ be a subset of vertices in $G$. We show the following (also see Table~\ref{tab:res}).
\begin{enumerate} 
\item $G$ has a subsetwise $+\epsilon W(\cdot, \cdot)$ spanner with $O_\epsilon(|S|)$ \lightness (Theorem~\ref{thm:eps_spnr}) and the spanner can be constructed deterministically.
\item $G$ has a subsetwise $+(4+\epsilon) W(\cdot, \cdot)$  spanner with $O_\epsilon(|V_H|^{1/3} |S|^{1/3})$ \lightness (Theorem~\ref{thm:4spnr}) and the spanner can be constructed deterministically.
\item $G$ has a subsetwise $+(4+\epsilon) W_{\max}$  spanner with $\tilde{O}_\epsilon(|S|\sqrt{|V'_H|/|V_H|})$ \lightness, where $W_{\max}$ is the maximum edge weight of the graph; $V_H$ and $V'_H$ are the sets of vertices of the Steiner trees connecting $S$ and a randomly sampled subset of $S$, respectively (Theorem~\ref{thm:4spnr_sampled}). 
\end{enumerate}

\begin{table}[h]
\caption{Results on lightweight additive spanners.}
\label{tab:res}
\footnotesize
\centering
\begin{tabular}{|ccp{0.25cm}p{.25cm}cp{0.25cm}|lcllp{0.25cm}l|}
\hline
\multicolumn{6}{|c|}{All-pairs Additive Spanners}                                  & \multicolumn{6}{c|}{Subsetwise Sp.}                                    \\ \hline
\multicolumn{3}{|c|}{Unweighted}                                                   & \multicolumn{3}{c|}{Weighted}                                                        & \multicolumn{6}{c|}{Weighted}                \\ \hline
\multicolumn{1}{|c|}{+$\beta$}            & \multicolumn{1}{c|}{Lightness}                & \multicolumn{1}{p{.01cm}|}{Ref}                                         & \multicolumn{1}{p{2cm}|}{+$\beta$}                                     & \multicolumn{1}{c|}{Lightness}                         & \multicolumn{1}{p{.01cm}|}{Ref}                   
 & \multicolumn{5}{p{.2cm}|}{Subset-Light.}                  &  Ref                                 \\ \hline
\multicolumn{1}{|c|}{0}                  & \multicolumn{1}{c|}{$O(n)$}             & \multicolumn{1}{p{.5cm}|}{\cite{khuller1995balancing}}                   & \multicolumn{1}{p{.8cm}|}{{+$\epsilon W(\cdot,\cdot)$}}                                            & \multicolumn{1}{c|}{{$O_\epsilon(n)$}}                                 & \multicolumn{1}{p{.5cm}|}{\cite{ahmed2021additive}}                                           & \multicolumn{5}{p{1.2cm}|}{$O_\epsilon(|S|)$}                          & Th~\ref{thm:eps_spnr}                                                     \\ \hline

\multicolumn{1}{|c|}{+2}                  & \multicolumn{1}{c|}{$O(n^{1/2})$}             & \multicolumn{1}{p{1.3cm}|}{\cite{Aingworth99fast,knudsen2014additive}}                   & \multicolumn{1}{p{.5cm}|}{-}                                            & \multicolumn{1}{p{.5cm}|}{-}                                 & \multicolumn{1}{p{.5cm}|}{-}                                           & \multicolumn{5}{p{.5cm}|}{-}                          & \multicolumn{1}{p{.5cm}|}{-}                                                     \\ \hline
 
\multicolumn{1}{|c|}{+4}                  & \multicolumn{1}{c|}{$O(n^{2/5})$}             & \multicolumn{1}{p{1cm}|}{\cite{chechik2013new}}                   & \multicolumn{1}{p{.8cm}|}{+(4+$\epsilon)$ $W(\cdot,\cdot)$}               & \multicolumn{1}{l|}{$O_\epsilon(n^{2/3})$}             & \multicolumn{1}{p{1cm}|}{\cite{ahmed2021additive}}                                           & \multicolumn{5}{p{1.6cm}|}{$O_\epsilon(|V_H|^{1/3}$ $|S|^{1/3})$}                          & Th~\ref{thm:4spnr},~\ref{thm:4spnr_sampled}                                                      \\ \hline
 
\multicolumn{1}{|c|}{+6}                  & \multicolumn{1}{c|}{$O(n^{1/3})$}             & \multicolumn{1}{p{1.3cm}|}{\cite{baswana2010additive,knudsen2014additive}}                   & \multicolumn{1}{p{.5cm}|}{-}                                            & \multicolumn{1}{p{.5cm}|}{-}                                 & \multicolumn{1}{p{.5cm}|}{-}                                           & \multicolumn{5}{p{.5cm}|}{-}                          & -                                                     \\ \hline
\end{tabular}
\end{table}

Notice that the results of Theorem~\ref{thm:eps_spnr} and Theorem~\ref{thm:4spnr} exactly match the bounds provided for all-pairs spanners~\cite{ahmed2021additive} if we set $S=V$. Hence, our results are not only answering stronger questions, but they are also directly comparable with the existing results. The main challenge to generalizing the all-pairs results to subsetwise results is that the former compares against the minimum spanning tree. 
A counting argument that is often used in additive spanner constructions is based on the number of \textit{improvements}, i.e., the number of improvements that may occur for all pairs of vertices is no more than an upper bound. Here, an improvement means that the difference in the shortest path between the pair before and after adding a set of edges is non-zero. 
However, this idea does not work in the subsetwise setting if we only compute a Steiner tree that spans $S$. A key contribution of our work is that we address this problem by computing Steiner trees that not only span $S$ but also the vertices in the shortest paths of some vertex pairs from $S$.

We next consider computing multi-level (multiplicative/additive) spanner, where the goal is to compute a hierarchy of (multiplicative/additive) spanners 
that minimizes the total number of edges or the total edge weight of all the spanners. Specifically, the input consists of a graph and some subsets $S_1, S_2, \ldots, S_k$ of its vertices where $S_1\subseteq S_2\subseteq \ldots \subseteq S_k$. The goal is to find a hierarchy of spanners such that the spanner at the $i$th level connects $S_i$ and for $i>2$, the spanner  $S_i$ includes the edges of the spanners $S_1,\ldots, S_{i-1}$. Multi-level spanners can model real-world scenarios where different levels of importance are assigned to different sets of vertices. Examples of such scenarios include modeling the quality of service requirements for the nodes in communication networks~\cite{Charikar2004ToN}, or visualizing a graph on a map when the details are revealed as the users zoom in~\cite{gray2023scalable}.  There exists a 4-approximation algorithm for this problem~\cite{ahmed2023multi}, and we give an improved $e$-approximation (Theorem~\ref{thm:multi-level}).

The rest of the paper is organized as follows. Section~\ref{sec:comp} discusses the properties of Steiner-lightness and \lightness. Section~\ref{sec:lightweight} discusses the results on lightweight additive spanners. 
Section~\ref{sec:multi-level} presents the $e$-approximation algorithm for multilevel spanners. Finally, Section~\ref{sec:conclusion} concludes the paper with directions for future research.


\section{Comparing Steiner-Lightness and Subset-Lightness}\label{sec:comp}
\setcounter{theorem}{0}

In this section we compare Steiner-lightness and \lightness. We first show that there exist graphs with terminal set $S$, where constant additive spanners have Steiner-lightness $\Omega(|S|)$ but \lightness $O(1)$. 

\begin{theorem}
For every fixed $c\ge 1$, there exists a graph $G$ with terminal set $S$ such that every $+cW(\cdot,\cdot)$ additive spanner of $G$ has a Steiner-lightness of $\Omega(|S|)$ 
but a \lightness of $O(1)$.
\end{theorem}

\begin{proof}
Consider a $2n$-vertex complete graph $G$ with vertex set $S$ and let $A,B$ be a partition of $S$ into equal-size subsets. Assume that the weight of each edge that is crossing the partition is $c$ and the weight of each remaining edge is $(c+\delta)$, where $\delta>0$.  A minimum spanning tree $M$ of $G$ is a star graph with weight {\color{red}$O(c|S|)$}.
~Let $G'$ be a graph obtained by subdividing each edge of $G$ with $(2c-1)$ division vertices and distributing the weight of the edge equally among the $2c$ new edges. It is straightforward to verify that $M$ determines a minimum Steiner tree $R$ in $G'$ with terminal set $S$. Consider a pair $a,b$, where $a\in A$, $b\in B$ and the edge $(a,b)$ does not belong to $R$. Then the weight of the shortest path $a,\ldots,b$ in $R$ is $2c+\delta > cW(\cdot,\cdot) = c\cdot \frac{c+\delta}{2c} =\frac{c+\delta}{2}$ \todo{$d_R(a,b) = 2c + \delta > 2c = c + c = c + cW(\cdot, \cdot) = d_G(a,b) + cW(\cdot, \cdot)$, is it better?} and thus does not satisfy the distance condition for a $+cW(\cdot,\cdot)$  spanner. There are $\Omega(|S|^2)$ pairs with one element in $A$ and the other in $B$, and every $+cW(\cdot,\cdot)$  spanner must take all these shortest paths. Therefore, the Steiner-lightness of such a spanner is $\Omega(|S|)$. 

Consider now a minimum Steiner tree $T$ with terminals $(S\cup S^*)$, where $S^*$ are the vertices on the shortest paths of the pairs for which the path on $R$ do not satisfy $+cW(\cdot,\cdot)$  additive spanner constraint. Then $|S\cup S^*| = {\color{red}\Theta(c|S|^2)}$
~and thus the weight of $T$ is also {\color{red}$\Theta(c|S|^2)$} and . Since the weight of a  $+cW(\cdot,\cdot)$  spanner is {\color{red}$O(c|S|^2)$}, its \lightness is constant. 
\end{proof}
{\color{red}We now provide another result that shows the contrast between Steiner-lightness and \lightness.}

\begin{theorem}
\label{thm:2}For every fixed $c\ge 1$, there exists a graph $G$ with terminal set $S$ such that every minimum weight  $+cW(\cdot,\cdot)$ additive spanner 
has Steiner-lightness $O(1)$ but subset-lightness $O(1/|S| )$.
\end{theorem}
 \begin{proof}
 
{\color{red}


We now show a scenario where Steiner-lightness is $O(1)$, however, subset-lightness is $O(1/|S| )$.
~Consider a graph having a vertex set $ V =(V_a \cup V_b) , V_a \cap V_b = \emptyset$, where the vertices of $V_a$ (similarly, $V_b$) form a complete graph.  We subdivide each edge with four division vertices and put a weight of 2 unit to each edge. We add two vertices $c^1_a$ and $c^1_b$, and them make them adjacent to all vertices in $V_a$ and $V_b$, respectively, with edges of weight $6+\epsilon$.
~We add two vertices $c^2_a$ and $c^2_b$, and them make them adjacent to all vertices in $V_a$ and $V_b$, respectively, with edges of weight 5. Figure~\ref{fig:open_prob_1}(a) illustrates a cycle with three vertices of $V_a$ (corner points). Figure~\ref{fig:open_prob_1}(b) illustrates the vertex $c^1_a$ and its connection to the three vertices of $V_a$.


Suppose there are two disjoint stars $S^1_a$  and $S^1_b$ (with center vertices  $c^1_a$ and $c^1_b$) that span $V_a$ and $V_b$ respectively, and there is an edge  $P^1$ of weight 1 connecting $S^1_a$ and $S^1_b$. Now $S^1_a \cup S^1_b \cup P^1$ forms a Steiner tree $T$.  Figure~\ref{fig:open_prob_1}(c) illustrates $T$ where $|V_a|=|V_b|=3$. Let $T$ be a minimum Steiner tree spanning the terminal set $S = (V_a \cup V_b)$.

Now, suppose $S^1_a$ and $S^1_b$ are not light enough to satisfy the additive errors of the vertex pairs of $V_a$ and $V_b$, respectively. There is another pair of slightly lighter stars, $S^2_a$ and $S^2_b$, that satisfies the additive error. Here  $S^2_a$ and $S^2_b$ are connected through a slightly heavier edge $P^2$ of weight 8. Now $S^2_a \cup S^2_b \cup P^2$ forms a non-optimal Steiner tree $T'$ (see Figure~\ref{fig:open_prob_1}(d)), but it is a $+cW$ additive spanner with $c=1$. 

To formally argue that $T'$ is a $+c W(\cdot, \cdot)$-spanner, we have to show that for all pairs of subset vertices ($V_a \cup V_b$) the shortest path in $T'$ satisfies the spanner condition. 
First observe that the shortest path distance between a pair of vertices  of $V_a$ (see Figure~\ref{fig:open_prob_1}(a)) is 10. Notice that $T'$ is a $+c W(\cdot, \cdot)$-spanner where 
$c = 1$. 
 If the pair of vertices is from $V_a$ (or $V_b$), then the shortest path in $T'$ is equal to the shortest path of the input graph. If one vertex of the pair is from $V_a$ and the other is from $V_b$, then the shortest path of that pair in the input graph is $13+2\epsilon$
 ~(see Figure~\ref{fig:open_prob_1}(c)). In order to satisfy the spanner condition, the shortest path in $T'$ can be at most $13+2\epsilon + c\cdot (6+\epsilon)$, i.e., $19 + 3\epsilon$
 ~when $c = 1$. 
~Since the shortest path distance in $T'$ is 18, in general, $T'$ is a $+c W(\cdot, \cdot)$-spanner. 
Since the weight of $T'$ is within a constant factor of the weight of  $T$,   the 
Steiner-lightness is $O(1)$. 

To show that the \lightness of any minimum weight additive spanner is  $O(1/|S|)$, we now consider a minimum Steiner tree $T^+$ that includes all the vertices of the shortest paths as well as those in $S$. 
Since every division vertex must have at least one edge adjacent to in $T^+$, the weight of $T^+$ is at least $\Omega(|S|^2)$, \todo{should we say $\Theta (|S|^2)$? I think $\Omega(|S|^2)$ is better.}as illustrated in Figure~\ref{fig:open_prob_1}(e).  
Hence the \lightness of any minimum weight additive spanner is at most  $weight(T')/weight(T^+) = O(1/|S|)$.\todo{j:I added minimum weight because otherwise the statement is not true. I updated the theorem and also the intro accordingly, but did not change theorem 1.}

We can generalize the example of Figure~\ref{fig:open_prob_1}. Let us assume we want to construct a \todo{should we add minimum weight? Yes, that will be consistent.} $+c W(\cdot,\cdot)$-spanner for $c>1$. 
Instead of having the shortest path distance between any pair of vertices in $V_a$ (and $V_b$) equal to 10, we can consider that the shortest path distance is $x>0$. Similarly, instead of having the edge weight of an edge of $S_a^1$ equal to 6, we can set it $\frac{x}{2} + \frac{c'}{2}W_a$ 
where $W_a$ is the maximum edge weight of the shortest path between a pair from $V_a$ (or $V_b$) and $c'$ is slightly larger than $c$ (for example, $c' = c + 0.01$). Also, we set the weight of $P^1$ equal to $\eta$ where $\eta$ is slightly larger than zero (for example, $\eta=0.01$). 
Finally, we need to set the weight of $P^2$ equal to $y$ such that:
\begin{itemize}
    \item The graph $S_a^2 \cup S_b^2 \cup P^2$ is a $+c W(\cdot, \cdot)$-spanner.
    \item the weight of $T'$ is larger than the weight of $T$.
\end{itemize}
~Now, we get the condition $y \leq c' W_{ab} + \eta$ 
~to claim $T'$ is a $+c W(\cdot,\cdot)$-spanner where $W_{ab}$ is the maximum edge weight of the shortest path between a pair of vertices from $V_a$ and $V_b$. And we get the condition $|V_a| c' W_a + \eta < y$ 
~to claim that the weight of $T'$ is larger than the weight of $T$. In other words, $|V_a|W_a < \frac{y-\eta}{c'} \leq W_{ab}$. 
~We can set $W_a$ arbitrarily small and $W_{ab}$ arbitrarily large to satisfy this condition for any positive $y>\eta$ and $c'$.

\begin{figure}[pt]  
\centering
\includegraphics[width=\textwidth]{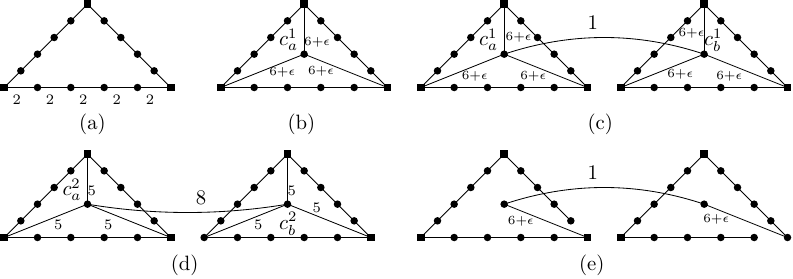}



\caption{Illustration for Theorem~\ref{thm:2}. (a) A cycle through the three vertices of $V_a$ which are located at the corners. (b) The star $S^1_a$. (c) The minimum Steiner tree $T$ ($S^1_a \cup S^1_b \cup P^1$). (d) A non-optimal Steiner tree $T'$ ($S^2_a \cup S^2_b \cup P^2$). (e) The minimum Steiner tree $T^+$ over $S$ and the shortest path vertices.
}\label{fig:open_prob_1}
\end{figure}
}
 \end{proof}


\section{Lightweight Additive Subsetwise Spanner}\label{sec:lightweight}
In this section, we give our results on additive and lightweight spanners. All the spanners that we construct rely on the following preliminary setup.

\subsection{Preliminary Setup}
 
Let $G=(V,E)$ be a weighted graph and let $S \subseteq V$ be a set of given vertices. 
Fix a shortest path for each pair of vertices in $S$. First, we compute a constant-factor approximation of the Steiner tree $R$ with $S$ as the set of terminals. Next, we identify the vertex pairs from $S$ for which the computed Steiner tree does not provide a path that satisfies the spanner condition. We then compute $S'$ by taking the union of $S$ and the vertices from the shortest paths of those vertex pairs. To compute a lightweight subsetwise spanner of $S$, we compute $R$ and $T$, which are constant-factor approximations for the Steiner trees with terminal sets $S$ and $S'$, respectively. Note that an approximate Steiner tree can be computed in polynomial time~\cite{Byrka2013}. 
We then construct a Steiner tree $H=(V_H, E_H)$ in $G$   with the set $S'$ as terminals by taking the union of $R$ and $T$ and discarding unnecessary edges from $T$. The reason for keeping the edges of $R$ is that later in our algorithm will use the vertices of $S'$, which we determined using $R$. Since $S\subseteq S'$, we have $weight(R)<weight(T)$, and hence  $weight(H)\le 2\times weight(T)$. Therefore, we can compare asymptotic \lightness with respect to $weight(H)$ instead of $weight(T)$.  
We will focus our attention on a weight-scaled version of $G$ and a subdivision of $H$, which is described below. 

After computing $H$, we multiply all edge weights of $G$ by $|V_H|/weight(H)$ to obtain a new weighted graph $G_s=(V_s, E_s)$. Let $H_s$ be the scaled Steiner tree obtained from $H$. We now observe some properties of $G_s$  that will be useful for the spanner construction. Although the Steiner tree has not been used previously for constructing lightweight spanners, such scaled graph has previously been used in~\cite{ahmed2021additive} for constructing lightweight spanners.  The following lemma shows this weight scaling preserves some distance relations.

\begin{lemma}\label{lem:weight_scaling} 
    Let $G'=(V', E')$ be a subgraph of $G$ and let $G'_s=(V'_s, E'_s)$ be a subgraph of $G_s$ such that $V'_s$ and $V'$ (similarly,  $E'_s$ and $E'$) correspond to the same set of vertices (edges) in $G$.  Then $d_{G'}(u,v) \leq \alpha d_G(u,v) + \beta W_G(u,v)$ if and only if $d_{G'_s}(u,v) \leq \alpha d_{G_s}(u,v) + \beta W_{G_s}(u,v)$, where $\alpha,\beta\ge 1$.
\end{lemma}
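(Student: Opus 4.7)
The plan is to show that multiplying every edge weight by the positive scalar $c := |V_H|/\mathrm{weight}(H)$ is, up to a global rescaling by $c$, an isometry on every subgraph, so the spanner inequality transforms consistently between $G$ and $G_s$ on both sides.

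First I would verify that for any subgraph $G'\subseteq G$ and its scaled counterpart $G'_s$, the weight of every $u$-$v$ path $\pi$ in $G'_s$ equals $c$ times the weight of the corresponding path in $G'$. Since $c>0$, this preserves the relative ordering of path weights, and so the set of shortest $u$-$v$ paths in $G'_s$ is exactly the set of shortest $u$-$v$ paths in $G'$. Taking $G'=G$ and $G'=G'$ respectively yields
\[
d_{G_s}(u,v)=c\cdot d_G(u,v), \qquad d_{G'_s}(u,v)=c\cdot d_{G'}(u,v).
\]
Since $W_G(u,v)$ is the maximum edge weight along a fixed shortest $u$-$v$ path in $G$, and that same path is also a shortest $u$-$v$ path in $G_s$ (with every edge weight multiplied by $c$), we obtain $W_{G_s}(u,v)=c\cdot W_G(u,v)$ as well.

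Second, I would multiply the assumed inequality $d_{G'}(u,v)\le \alpha d_G(u,v)+\beta W_G(u,v)$ through by $c$. Using the three identities above term by term gives $d_{G'_s}(u,v)\le \alpha d_{G_s}(u,v)+\beta W_{G_s}(u,v)$, proving the forward direction; dividing by $c$ in the converse direction is identical. The hypothesis $\alpha,\beta\ge 1$ is in fact not used: the argument works for arbitrary nonnegative coefficients because the scaling distributes linearly on both sides.

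There is no real obstacle here; the only subtlety is making sure the chosen shortest $u$-$v$ path used to define $W_G(u,v)$ can also serve as the shortest path used to define $W_{G_s}(u,v)$, which follows from the observation that uniform positive scaling permutes the shortest-path set trivially. The lemma is essentially a compatibility check asserting that uniform scaling commutes with both the shortest-path metric and the ``maximum edge on a shortest path'' functional, which justifies passing to the scaled graph $G_s$ for all subsequent constructions.
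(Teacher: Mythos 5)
Your proof is correct and follows the same route as the paper's (which simply multiplies both sides of the inequality by $c = |V_H|/\mathrm{weight}(H)$); you merely spell out the two facts the paper leaves implicit, namely that uniform positive scaling preserves the set of shortest paths so $d_{G'_s}=c\,d_{G'}$, $d_{G_s}=c\,d_G$, and $W_{G_s}=c\,W_G$. Your remark that the hypothesis $\alpha,\beta\ge 1$ is never used is also accurate.
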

\begin{proof}
One can multiply $|V_H|/weight(H)$ to both sides of the inequality $d_{G'}(u,v) \leq \alpha d_G(u,v) + \beta W_G(u,v)$ to obtain $d_{G'_s}(u,v) \leq \alpha d_{G_s}(u,v) + \beta W_{G_s}(u,v)$. 
\end{proof}

The following lemma shows that we can safely remove the edges of weight larger than $|V_H|$ from $G_s$. 
\begin{lemma}\label{lem:remove_large_weight}
    Removal of the edges with weight larger than $|V_H|$ from $G_s$ does not increase the shortest path length for any pair of vertices of $S$.
\end{lemma}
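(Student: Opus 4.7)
The plan is to exploit the fact that the scaled Steiner tree $H_s$ has total weight exactly $|V_H|$ by construction, so it already provides a short ``backup'' path between any two terminals in $S$, which makes any heavy edge superfluous for shortest paths between pairs in $S$.

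More concretely, I would proceed in three steps. First, I would observe that the scaling factor $|V_H|/\textit{weight}(H)$ was chosen precisely so that $\textit{weight}(H_s) = |V_H|$, and that because $S \subseteq S' \subseteq V_H$, the tree $H_s$ contains every vertex of $S$. Second, for any pair $u,v \in S$, the unique $u$-to-$v$ path in $H_s$ lies inside $G_s$ and has weight at most $\textit{weight}(H_s) = |V_H|$, so $d_{G_s}(u,v) \le |V_H|$. Third, suppose for contradiction that a shortest $u$-to-$v$ path $P$ in $G_s$ uses an edge $e$ with weight strictly greater than $|V_H|$. Then the total weight of $P$ would be strictly greater than $|V_H|$, contradicting the upper bound from $H_s$. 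Hence no shortest path between vertices of $S$ uses any edge of weight larger than $|V_H|$, and removing all such heavy edges cannot increase $d_{G_s}(u,v)$ for $u,v \in S$.

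There is no real obstacle here; the argument is essentially a weight-budget bookkeeping fact that follows immediately from the chosen normalization of $G_s$. The only small care is to note explicitly that $S \subseteq V_H$ (so that the tree path between terminals actually exists in $H_s$), which is guaranteed by the construction of $H$ as a Steiner tree with terminal set $S' \supseteq S$.
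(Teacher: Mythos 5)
Your proof is correct and takes essentially the same approach as the paper's: both use the fact that the scaled Steiner tree $H_s$ spans all of $S$ and has total weight $|V_H|$, so $d_{G_s}(u,v)\le|V_H|$ for $u,v\in S$, and hence no shortest path between terminals can traverse an edge heavier than $|V_H|$.
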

\begin{proof}
    Assume for a contradiction that $e$ is an edge with weight larger than $|V_H|$ and removal of $e$ increases the shortest path length of $u,v \in S$. It now suffices to show that $e$ cannot be an edge on the shortest path between $u$ and $v$. Note that the Steiner tree $H_s$ spans $u$ and $v$. Therefore, the length of the shortest path between $u$ and $v$ in $G_s$ is at most the total edge weight of the scaled Steiner tree, i.e., $weight(H)$ multiplied by $|V_H|/weight(H)$, which is equal to  $|V_H|$ and hence the shortest path cannot contain $e$. 
\end{proof}

We now construct a tree $H'=(V_{H'},E_{H'})$ by subdividing the scaled Steiner tree $H_s$, which will help extend techniques for spanner construction to guarantee \lightness (Figure~\ref{fig:initialization}). 
Specifically, we subdivide each edge of $H_s$ using a minimum number of subdivision vertices such that all edge weights in $H'$ become less than or equal to one unit. 


\begin{figure}[pt]
 \centering\includegraphics[width=.8\linewidth]{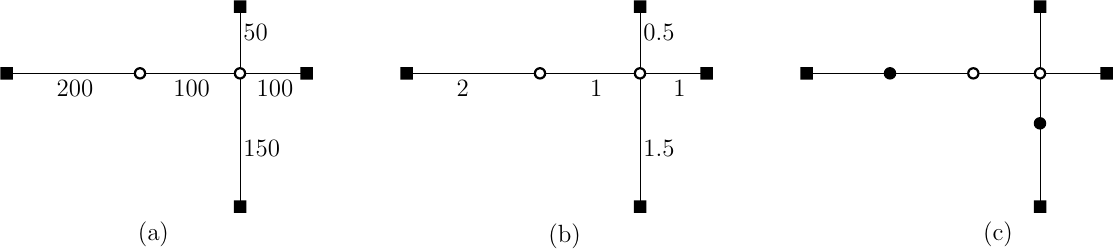}
     
\caption{Illustration for (a) Steiner tree $H$, (b) scaled Steiner tree $H_s$, and (c) subdivided Steiner tree $H'$, where the  
Squares represent vertices of $S$ and solid circles represent subdivision vertices.} \label{fig:initialization}
\end{figure}
\begin{lemma}
    Let $n_{H'}$ be the number of subdivision vertices in $H'$. Then $n_{H'} \leq weight(H')$. and $|V_{H'}| \leq 2|V_H|$. 
\end{lemma}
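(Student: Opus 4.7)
The plan is to combine two very simple facts: scaling followed by subdivision leaves the total weight of the Steiner tree equal to $|V_H|$, and on each edge the number of subdivision points needed to make all pieces of weight at most one is bounded by the edge's own weight.

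First I would record that after multiplying every edge weight of $G$ by $|V_H|/\textit{weight}(H)$, the scaled Steiner tree $H_s$ has total weight exactly $|V_H|$. Subdividing an edge into sub-edges whose weights sum to the original does not change the total weight of the tree, so $\textit{weight}(H') = \textit{weight}(H_s) = |V_H|$. This identity is what will let me convert a bound on $n_{H'}$ in terms of $\textit{weight}(H')$ into the required bound on $|V_{H'}|$.

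Next I would analyse a single edge $e$ of $H_s$ of weight $w(e) > 0$. The minimum number of subdivision vertices needed so that all resulting sub-edges have weight at most one is $\max\{0,\lceil w(e)\rceil - 1\}$: if $w(e)\le 1$ no subdivision is needed, and otherwise splitting $e$ into $\lceil w(e)\rceil$ equal pieces of weight $w(e)/\lceil w(e)\rceil \le 1$ is optimal. A two-case check shows that this count is always at most $w(e)$. Indeed, when $w(e)\le 1$ the count is $0\le w(e)$, and when $w(e)>1$ the inequality $\lceil w(e)\rceil \le w(e)+1$ gives $\lceil w(e)\rceil - 1\le w(e)$.

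Summing the per-edge bound over all edges of $H_s$ yields
\[
n_{H'} \;=\; \sum_{e\in E(H_s)} \max\{0,\lceil w(e)\rceil-1\} \;\le\; \sum_{e\in E(H_s)} w(e) \;=\; \textit{weight}(H_s) \;=\; \textit{weight}(H'),
\]
which is the first claim. The second follows immediately, since the vertex set of $H'$ is the vertex set of $H_s$ (equivalently, of $H$) together with the new subdivision vertices, so $|V_{H'}| = |V_H| + n_{H'} \le |V_H| + \textit{weight}(H') = |V_H| + |V_H| = 2|V_H|$. There is essentially no obstacle here; the only point that requires a little care is the treatment of short edges, i.e., those of weight at most one, which contribute zero subdivision vertices rather than a spurious negative count.
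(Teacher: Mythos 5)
Your proof is correct and follows essentially the same route as the paper's: bound the per-edge subdivision count by $\lceil w(e)\rceil - 1 \le w(e)$, sum over edges to get $n_{H'} \le \textit{weight}(H_s) = \textit{weight}(H') = |V_H|$, and then add $|V_H|$ original vertices. Your write-up is actually a bit cleaner than the paper's (which contains the typo ``$\textit{weight}(H_s)=\textit{weight}(H)$'' where $|V_H|$ is meant, and ends with ``$\le 2|V_{H'}|$'' rather than $2|V_H|$), and you explicitly handle the short-edge case that the paper glosses over, but the underlying argument is identical.
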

\begin{proof}
It suffices to subdivide each edge $(u,v)$ with $\lceil weight_{H_s}(u,v)\rceil -1$ vertices. 
The sum over all edges is bounded by  $weight(H_s){=}weight(H)$. 
Since $weight({H_s})\le |V_H|$, we have $|V_{H'}|=|V_H|+n_{H'}\le |V_H|+weight(H_s)\le 2|V_{H'}|$.  
\end{proof}


\subsection{A $+\epsilon W(\cdot, \cdot)$-spanner}
 

\textbf{Overview of the Construction.} We construct the spanner incrementally by maintaining a current graph. We first take a set of Steiner tree edges (with $S'$ as terminals) into the current graph and then examine each pair $(u,v)$ from $S$ in increasing order of $W(u,v)$. If the shortest path distance between $u,v$ in the current graph is more than what it requires for a $+\epsilon W(\cdot,\cdot)$-spanner, then we add all the edges on the shortest path between $u$ and $v$ in $G_s$ that are currently missing. During this step, we show that some pairs of vertices get  `set-off', i.e., their shortest path distances in the current graph become comparable to that of $G_s$, and the number of such pairs is comparable to the total weight of the edges that have been added in this step. The final bound on the \lightness of the spanner comes by relating the total weight added to the total number of vertex pairs that were set off during the whole construction. A similar technique has previously been used to compute lightweight spanner for the whole graph, and using minimum spanning tree~\cite{ahmed2021additive}. We extend that technique for subsetwise spanner using the Steiner tree.     


\smallskip
\noindent
\textbf{Construction Details.} For each vertex $v\in S'$, let $e_v$ be the minimum weight edge incident to $v$ in $G_s$. Let $H_0=(V_{H_0}, E_{H_0})$ be a graph where $S' \subseteq V_{H_0}$ and $E_{H_0}$ is obtained by choosing from all such edges the ones with a weight of less than one unit. 

\begin{lemma}\label{lem:neiborhood_small_weight}
    Let $P$ be a shortest path in $G_s$ between a pair of vertices $u, v \in S$. Let $e=(w,r)$ be an edge of $P$ that does not belong to $H_0$ such that $weight(e)<1$. Then there exists a vertex $q$ in $H_0$ such that $d_{H_0}(w,q) \leq weight(e)$. 
\end{lemma}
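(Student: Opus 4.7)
The plan is to unfold the definition of $H_0$ and show that the minimum-weight incident edge at $w$ lives in $H_0$ and gives us the vertex $q$ immediately. I would not need to construct any multi-edge path; a single hop should suffice.

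First, I would argue that $w$ belongs to $S'$. This is the crucial contextual fact: the lemma is invoked inside the incremental spanner construction for pairs whose shortest path is among those added to $S'$ (together with the fact that the endpoints of $P$ lie in $S \subseteq S'$). Hence, by the definition of $H_0$, there is a designated edge $e_w = (w, q)$ that is the minimum-weight edge incident to $w$ in $G_s$, and $e_w$ is placed in $E_{H_0}$ whenever $\mathrm{weight}(e_w) < 1$.

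Second, I would compare $e_w$ against the given edge $e = (w, r)$. Since $e$ is itself an edge incident to $w$ in $G_s$ with $\mathrm{weight}(e) < 1$, the minimality of $e_w$ yields
\[
\mathrm{weight}(e_w) \;\le\; \mathrm{weight}(e) \;<\; 1,
\]
so $e_w \in E_{H_0}$. Let $q$ be the other endpoint of $e_w$; then $q \in V_{H_0}$ and the single-edge walk $w \to q$ in $H_0$ gives
\[
d_{H_0}(w,q) \;\le\; \mathrm{weight}(e_w) \;\le\; \mathrm{weight}(e),
\]
which is exactly the conclusion.

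The only subtle point, and hence the main thing to be careful about, is justifying $w \in S'$ from the context in which the lemma is applied; once that is in hand, the rest is just a one-line appeal to the definition of $H_0$. No nontrivial calculation or graph-theoretic argument beyond the minimality of $e_w$ is needed.
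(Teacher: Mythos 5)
Your proof is correct and takes essentially the same approach as the paper's: both deduce $w \in S'$, then observe that $H_0$ contains a minimum-weight edge incident to $w$ whose weight is at most $\mathrm{weight}(e) < 1$, and take its other endpoint as $q$. The only difference is that you are more explicit about the contextual step $w \in S'$, which the paper asserts without comment; your caution there is warranted since $S'$ by definition contains shortest-path vertices only for pairs that fail the spanner condition in $R$, so the claim really does rely on the lemma being invoked in the right part of the construction.
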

\begin{proof}
    Since $w \in S'$, by construction, $H_0$ contains an edge $e'=(w,q)$ incident to $w$ such that $weight(e')<weight(e)<1$. 
\end{proof}

\begin{lemma}\label{lem:first_ngbr}
    The number of vertices of $H_0$, $|V_{H_0}|\leq 2|S'|$ and 
    $weight(H_0) \leq |S'| \leq |V_H|$.
\end{lemma}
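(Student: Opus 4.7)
The plan is to bound $|V_{H_0}|$, $weight(H_0)$, and $|S'|$ separately by straightforward counting arguments, since $H_0$ is defined as the union of at most one selected incident edge per vertex of $S'$.

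First I would count edges. By construction, every edge of $E_{H_0}$ is of the form $e_v$ for some $v \in S'$, obtained by choosing for each $v \in S'$ its minimum-weight incident edge in $G_s$ and keeping it only if its weight is less than $1$. Since distinct vertices can share the same selected edge, the number of edges satisfies $|E_{H_0}| \leq |S'|$, and each kept edge has weight strictly less than $1$. Summing gives $weight(H_0) < |E_{H_0}| \leq |S'|$, which yields the weight bound.

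Next I would bound the vertex count. By definition $S' \subseteq V_{H_0}$, so the only vertices of $V_{H_0}$ not already accounted for are the ``other'' endpoints of the edges in $E_{H_0}$. Each edge contributes at most one new vertex, so $|V_{H_0}| \leq |S'| + |E_{H_0}| \leq 2|S'|$.

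Finally I would relate $|S'|$ to $|V_H|$ using the preliminary setup. Recall that $H = (V_H, E_H)$ was constructed as a Steiner tree with terminal set $S'$, hence $S' \subseteq V_H$, giving $|S'| \leq |V_H|$. Combining this with the earlier inequality $weight(H_0) \leq |S'|$ yields $weight(H_0) \leq |S'| \leq |V_H|$ as claimed. There is no real obstacle here; the lemma is purely a bookkeeping consequence of the ``one small edge per terminal'' construction and the fact that $S'$ sits inside the Steiner tree $H$.
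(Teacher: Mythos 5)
Your proof is correct and follows essentially the same approach as the paper's: you bound the edge count by at most one chosen edge per vertex of $S'$ (each of weight less than one), conclude $weight(H_0)\leq|S'|$ and $|V_{H_0}|\leq 2|S'|$, and use $S'\subseteq V_H$ from the Steiner-tree construction to get $|S'|\leq|V_H|$. The extra remarks about shared edges and the ``other'' endpoint are just explicit bookkeeping of the same argument.
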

\begin{proof}
   Since $H$ is a Steiner tree on terminal set $S'$, $|S'| \leq |V_H|$. Since for each vertex $u \in S'$, we add at most one edge with weight less than one in $H_0$, $|V_{H_0}|\leq 2|S'|$ and $weight(H_0) \leq |S'|$.
\end{proof}


Let $H_1$ be the graph obtained by adding all edges of $H_0$ and $H'$. The following claim shows that any edge in the shortest path in $G_s$ if does not exist in $H_1$, then the end vertices of that edge has a close neighborhood (the set of neighbor vertices that has distance no more than the weight of the missing edge) in $H_1$. 

\begin{lemma}\label{lem:neighborhood_all_weight}
    Let $P$ be a shortest path in $G_s$ between a pair of vertices $u, v \in S$. Let $e=(w,r)$ be an edge of $P$ that does not belong to $H_1$. Then there exists a set of vertices $V_r \subseteq V_{H_1}$ such that $|V_r| = \Omega(weight(e))$ and for each vertex $q \in V_r, d_{H_1}(w,q) \leq weight(e)$.  
\end{lemma}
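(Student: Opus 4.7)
The plan is a case analysis on $weight(e)$, making essential use of the fact that $w \in S'$ in the context where the lemma is invoked. Since the lemma is applied while processing a pair $u,v \in S$ that fails the spanner condition in $H_1$, and since the subdivided tree $H'$ (which is contained in $H_1$) preserves the distances of $R$ under the weight scaling (by Lemma~\ref{lem:weight_scaling} together with $R \subseteq H$), the pair $(u,v)$ must also fail the condition in $R$; hence every vertex on the $u$-$v$ shortest path in $G_s$ was added to $S'$ during its construction, so in particular $w \in S' \subseteq V_H \subseteq V_{H'}$.

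For $weight(e) < 1$, I would directly invoke Lemma~\ref{lem:neiborhood_small_weight} to obtain a single vertex $q \in V_{H_0} \subseteq V_{H_1}$ with $d_{H_1}(w,q) \leq d_{H_0}(w,q) \leq weight(e)$, and take $V_r = \{q\}$. Then $|V_r| = 1 \geq weight(e)$, which is $\Omega(weight(e))$ in this regime.

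For $weight(e) \geq 1$, I would exploit the subdivided tree $H'$, every edge of which has weight at most one. Two sub-cases arise. If some vertex of $V_{H'}$ is at $H'$-distance at least $weight(e)$ from $w$, walk along a shortest path in $H'$ from $w$ to it; since each edge weighs at most one, the $j$-th vertex past $w$ on this walk lies at $H'$-distance at most $j$ from $w$, so the first $\lfloor weight(e) \rfloor$ such vertices all lie within $H'$-distance $weight(e)$ of $w$, and therefore within $H_1$-distance $weight(e)$ as well. Otherwise the whole of $V_{H'}$ lies in the $H_1$-ball of radius $weight(e)$ around $w$, and $|V_{H'}| \geq |V_H| \geq weight(e)$ by Lemma~\ref{lem:remove_large_weight} (which bounds $weight(e) \leq |V_H|$ because $e$ is on a shortest path in $G_s$). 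Either way, $|V_r| = \Omega(weight(e))$.

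I expect the main subtlety to be the first step: cleanly establishing $w \in S'$, because this is left implicit in the lemma statement but is indispensable to the rest of the argument, and without it $w$ might not even belong to $V_{H_1}$, rendering the conclusion vacuous. Once this is in place, the two weight regimes combine to produce the required $V_r$ by a short tree-walking argument in $H'$ together with a direct appeal to Lemma~\ref{lem:neiborhood_small_weight}.
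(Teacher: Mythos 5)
Your proof matches the paper's proof: the same case split on $weight(e)$, the same appeal to Lemma~\ref{lem:neiborhood_small_weight} when $weight(e)<1$, and the same ball-counting argument in the unit-weight subdivided tree $H'$ when $weight(e)\ge 1$. The paper simply asserts ``$w\in S'$'' and compresses the $\Omega(weight(e))$ neighborhood count into a single line, where you spell out the supporting chain (failure in $H_1$ forces failure in $R$, hence the path vertices were placed in $S'$) and the two tree-walking sub-cases; these are welcome elaborations of the same argument, not a different route.
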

\begin{proof}
    Note that $w\in S'$ and hence a vertex of $H'$. If $weight(e)<1$, then the claim follows form Lemma~\ref{lem:neiborhood_small_weight} if we choose $V_r=\{q\}$ where $q$ is the neighbor of $w$ added in $H_0$. 
    Assume now that $weight(e) \geq 1$. By Lemma~\ref{lem:remove_large_weight}, $weight(e) \leq |V_H| \leq weight(H_s)$ 
    and $H'$ is obtained by subdividing the edges of $H_s$ such that the weight of each edge of $H'$ becomes at most one. Since $H_1$ contains all edges of $H'$, we can set $V_r$ to be the set of vertices that has a distance less than or equal to $weight(e)$ in $H_1$. Since the weight of edges of $H'$ is at most 1, $|V_r| = \Omega(weight(e))$.
\end{proof}

We are now ready to describe the algorithm. 
We create a graph $G'_s=(V'_s, E'_s)$ by removing the edges $E_{H_s}$ from $G_s$ and adding the edges of $H'$. The algorithm sorts the pairs of vertices in $S$ based on their increasing order of maximum edge weight in the shortest path on $G'_s$. 
Then it checks whether the distance condition is unsatisfied for pairs of vertices in this sorted order, and if so then adds the missing edges on the corresponding shortest path. Specifically, let $H_i$ (initially, $H_1$, which consists of all the edges of $H_0$ and $H'$) be the current subgraph of $G'_s$. Then if the distance condition is unsatisfied, then the algorithm constructs $H_{i+1}$ by adding the missing edges of the shortest path to $H_i$. Once all the unsatisfied pairs are processed, the spanner condition is satisfied for all pairs of vertices in $S$. Although we used $G'_s$ to check the spanner condition, where some edges may be subdivided, the subdivision does not change the shortest path length, and hence the final spanner corresponds to a valid $+\epsilon W(\cdot,\cdot)$ spanner. 

\begin{figure}[h]
\includegraphics[width=\linewidth]{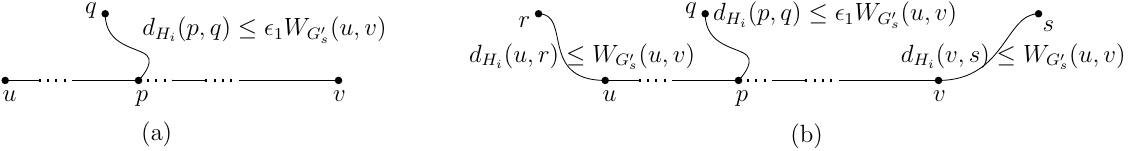}
\caption{Illustration for the set up for (a)  Lemma~\ref{lem:neighbor_improvement} and (b) Lemma~\ref{lem:neighbor_improvement_6}. The missing edges are shown in dotted lines.} \label{fig:lemsetup}
\end{figure}

\begin{lemma}\label{lem:neighbor_improvement}
    Let $H_i$ be a subgraph of $G'_s$ and let $P$ be the shortest path between $u$ and $v$ in $G'_s$. Let $H_{i+1}$ be the graph obtained from $H_i$ by adding the vertices and edges of $P$ that were missing in $H_i$. 
    Let $p$ be a vertex in $P$ and $q$ be a vertex in $H_i$ such that the distance $d_{H_i}(p,q) \leq \epsilon_1 W_{G'_s}(u,v)$ where $\epsilon_1 > 0$ (see  e.g., Figure~\ref{fig:lemsetup}(a)).
    If $d_{H_i}(u,v) > d_{G'_s}(u,v) + \epsilon W_{G'_s}(u,v)$, where $\epsilon = 2\epsilon_1 + \epsilon_2$, where  $\epsilon_2>0$,  
    then the following hold.
    \begin{enumerate}
        \item $d_{H_{i+1}}(u,q) \leq d_{G'_s}(u,q) + 2\epsilon_1 W_{G'_s}(u,v)$ and $d_{H_{i+1}}(v,q) \leq d_{G'_s}(v,q) + 2\epsilon_1 W_{G'_s}(u,v)$
        \item Either $d_{H_i}(u, q) - d_{H_{i+1}}(u, q) > \frac{\epsilon_2}{2}W_{G'_s}(u,v)$ or $d_{H_i}(v, q) - d_{H_{i+1}}(v, q) > \frac{\epsilon_2}{2}W_{G'_s}(u,v)$.
    \end{enumerate} 
\end{lemma}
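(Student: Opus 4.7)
For part (1), I would route from $u$ (symmetrically from $v$) to $q$ inside $H_{i+1}$ by going via $p$. Since the shortest path $P$ has now been added in full to $H_{i+1}$, and subpaths of shortest paths are shortest, $d_{H_{i+1}}(u,p) \le d_{G'_s}(u,p)$; combining with $d_{H_{i+1}}(p,q) \le d_{H_i}(p,q) \le \epsilon_1 W_{G'_s}(u,v)$ (using $H_i \subseteq H_{i+1}$) and the triangle inequality gives $d_{H_{i+1}}(u,q) \le d_{G'_s}(u,p) + \epsilon_1 W_{G'_s}(u,v)$. Converting $d_{G'_s}(u,p)$ to $d_{G'_s}(u,q)$ is cheap: the triangle inequality in $G'_s$ gives $d_{G'_s}(u,p) \le d_{G'_s}(u,q) + d_{G'_s}(q,p)$, and since distances in a subgraph are at least as large as in $G'_s$, $d_{G'_s}(q,p) \le d_{H_i}(q,p) \le \epsilon_1 W_{G'_s}(u,v)$. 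Substituting delivers the claimed bound, and the argument for $v$ is symmetric.

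For part (2), I would argue by contradiction. Suppose both differences are at most $\tfrac{\epsilon_2}{2} W_{G'_s}(u,v)$; then triangle inequality in $H_i$ through $q$ followed by substitution yields
\[
d_{H_i}(u,v) \le d_{H_i}(u,q) + d_{H_i}(q,v) \le d_{H_{i+1}}(u,q) + d_{H_{i+1}}(v,q) + \epsilon_2 W_{G'_s}(u,v).
\]
The critical step is then to bound the two $H_{i+1}$-distances by routing through $p$ (not through $q$ again): using $P \subseteq H_{i+1}$ exactly as in part (1), $d_{H_{i+1}}(u,q) \le d_{G'_s}(u,p) + \epsilon_1 W_{G'_s}(u,v)$ and $d_{H_{i+1}}(v,q) \le d_{G'_s}(v,p) + \epsilon_1 W_{G'_s}(u,v)$. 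Because $p$ lies on the shortest path $P$, we have the additivity $d_{G'_s}(u,p) + d_{G'_s}(p,v) = d_{G'_s}(u,v)$, so summing collapses everything to $d_{H_i}(u,v) \le d_{G'_s}(u,v) + (2\epsilon_1 + \epsilon_2) W_{G'_s}(u,v) = d_{G'_s}(u,v) + \epsilon W_{G'_s}(u,v)$, contradicting the hypothesis.

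The main obstacle, and really the only subtle choice, is the asymmetric routing: $d_{H_i}(u,v)$ must be broken through $q$ (to bring in the assumed bounds on the differences), but the resulting $H_{i+1}$-distances must then be broken through $p$ (to exploit additivity along $P$ and replace $d_{G'_s}(u,q) + d_{G'_s}(q,v)$ by the tighter $d_{G'_s}(u,v)$). Routing through $q$ at every step would accumulate an extra $2\epsilon_1 W_{G'_s}(u,v)$ of slack on each side and yield only $6\epsilon_1 + \epsilon_2$ in place of $2\epsilon_1 + \epsilon_2$, which is too loose to produce a contradiction.
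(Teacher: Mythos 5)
Your proof is correct and takes essentially the same route as the paper: in both parts you decompose through $p$ to exploit $P\subseteq H_{i+1}$ and the additivity $d_{G'_s}(u,p)+d_{G'_s}(p,v)=d_{G'_s}(u,v)$, which is exactly the paper's argument. The only cosmetic difference is that in part (2) you reuse the intermediate bound $d_{H_{i+1}}(u,q)\le d_{G'_s}(u,p)+\epsilon_1 W_{G'_s}(u,v)$ from part (1), while the paper re-expands $d_{H_{i+1}}(u,q)\le d_{H_{i+1}}(u,p)+d_{H_{i+1}}(p,q)$ from scratch and collects the two $d_{H_{i+1}}(p,q)$ terms explicitly.
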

\begin{proof}
    Using triangle inequality, we have the following:
    \begin{align*} 
    d_{H_{i+1}}(u,q) & \leq d_{H_{i+1}}(u,p) + d_{H_{i+1}}(p,q) \\
                     & \leq d_{H_{i+1}}(u,p) + \epsilon_1 W_{G'_s}(u,v) \\
                     & = d_{G'_s}(u,p) + \epsilon_1 W_{G'_s}(u,v) \\
                     & \leq d_{G'_s}(u,q) + d_{G'_s}(q,p) + \epsilon_1 W_{G'_s}(u,v) \\
                     & \leq d_{G'_s}(u,q) + d_{H_{i+1}}(q,p) + \epsilon_1 W_{G'_s}(u,v) \\
                     & \leq d_{G'_s}(u,q) + \epsilon_1 W_{G'_s}(u,v) + \epsilon_1 W_{G'_s}(u,v) \\
                     & = d_{G'_s}(u,q) + 2\epsilon_1 W_{G'_s}(u,v)
    \end{align*}
    Similarly, we can show that $d_{H_{i+1}}(v,q) \leq d_{G'_s}(v,q) + 2\epsilon_1 W_{G'_s}(u,v)$. If $d_{H_i}(u, q) - d_{H_{i+1}}(u, q) \leq \frac{\epsilon_2}{2}W_{G'_s}(u,v)$ and $d_{H_i}(v, q) - d_{H_{i+1}}(v, q) \leq \frac{\epsilon_2}{2}W_{G'_s}(u,v)$, then

    \begin{align*}
    d_{H_i}(u,v) & \leq d_{H_i}(u,q) + d_{H_i}(q,v) \\
                 & \leq d_{H_{i+1}}(u,q) + \frac{\epsilon_2}{2}W_{G'_s}(u,v) + d_{H_{i+1}}(q,v) + \frac{\epsilon_2}{2}W_{G'_s}(u,v) \\
                 & = d_{H_{i+1}}(u,q) + d_{H_{i+1}}(q,v) + \epsilon_2 W_{G'_s}(u,v) \\
                 & \leq d_{H_{i+1}}(u,p) + d_{H_{i+1}}(p,q) + d_{H_{i+1}}(q,p) + d_{H_{i+1}}(p,v) + \epsilon_2 W_{G'_s}(u,v) \\
                 & = d_{H_{i+1}}(u,p) + d_{H_{i+1}}(p,v) + d_{H_{i+1}}(p,q) + d_{H_{i+1}}(q,p) + \epsilon_2 W_{G'_s}(u,v) \\
                 & = d_{H_{i+1}}(u,p) + d_{H_{i+1}}(p,v) + 2d_{H_{i+1}}(p,q) + \epsilon_2 W_{G'_s}(u,v) \\
                 & \leq d_{H_{i+1}}(u,p) + d_{H_{i+1}}(p,v) + 2\epsilon_1 W_{G'_s}(u,v) + \epsilon_2 W_{G'_s}(u,v) \\
                 & = d_{H_{i+1}}(u,p) + d_{H_{i+1}}(p,v) + (2\epsilon_1 + \epsilon_2)W_{G'_s}(u,v) \\
                 & = d_{G'_s}(u, p) + d_{G'_s}(p, v) + (2\epsilon_1 + \epsilon_2)W_{G'_s}(u,v) \\
                 & = d_{G'_s}(u, v) + (2\epsilon_1 + \epsilon_2)W_{G'_s}(u,v) 
    \end{align*}

    However, we assumed $d_{H_i}(u,v) > d_{G'_s}(u,v) + (2\epsilon_1 + \epsilon_2)W_{G'_s}(u,v)$. Hence either $d_{H_i}(u, q) - d_{H_{i+1}}(u, q) > \frac{\epsilon_2}{2}W_{G'_s}(u,v)$ or $d_{H_i}(v, q) - d_{H_{i+1}}(v, q) > \frac{\epsilon_2}{2}W_{G'_s}(u,v)$. 

    
    

\end{proof}

By Lemma~\ref{lem:neighbor_improvement}, at each step, 
the shortest path distances from $u$ and $v$ to a set of vertices (e.g., vertex $q$ in the lemma) in the current graph become comparable with their shortest path distances in $G'_s$. Such pairs are referred to as getting \emph{set-off}. 
Furthermore, the shortest path distances in the current graph get smaller 
for some pair of vertices, e.g., $(u,q)$ or $(v,q)$. 
We say either the pair $(u,q)$ or $(v,q)$ gets an \textit{improvement}. We now show that a pair can only get $O(\epsilon_1/\epsilon_2)$ improvements after its set-off.


\begin{lemma}\label{lem:number_of_improvements}
    The total number of improvements of a pair of vertices after set-off is $O(\epsilon_1/\epsilon_2)$.  
\end{lemma}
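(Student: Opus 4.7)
The plan is to track, for the pair getting improved (say $(u,q)$), the ``excess'' quantity
$\Phi_i(u,q) := d_{H_i}(u,q) - d_{G'_s}(u,q) \ge 0$,
which is nonnegative because each $H_i$ is a subgraph of $G'_s$ and therefore its distances can only be larger than in $G'_s$. I would first show that at the moment $(u,q)$ gets set off by the processing of some pair $(u_0,v_0)\in S\times S$, Lemma~\ref{lem:neighbor_improvement}(1) gives
\[
\Phi_i(u,q) \;\le\; 2\epsilon_1\, W_{G'_s}(u_0,v_0).
\]
This is the ``initial budget'' that all subsequent improvements of $(u,q)$ must spend.

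Next, I would use the sorted processing order to bound each improvement from below. Any later improvement of $(u,q)$ happens while processing some pair $(u',v')\in S\times S$, and, by the algorithm's sorted order,
\(W_{G'_s}(u',v')\ge W_{G'_s}(u_0,v_0).\) Lemma~\ref{lem:neighbor_improvement}(2) applied to that step guarantees
\[
d_{H_i}(u,q) - d_{H_{i+1}}(u,q) \;>\; \tfrac{\epsilon_2}{2}\, W_{G'_s}(u',v')
\;\ge\; \tfrac{\epsilon_2}{2}\, W_{G'_s}(u_0,v_0).
\]
Since $d_{G'_s}(u,q)$ is fixed while $d_{H_{i+1}}(u,q)\le d_{H_i}(u,q)$, each such improvement strictly decreases $\Phi$ by more than $\tfrac{\epsilon_2}{2}W_{G'_s}(u_0,v_0)$.

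Putting these two facts together, the number of improvements of the pair $(u,q)$ after its set-off is at most
\[
\frac{2\epsilon_1\, W_{G'_s}(u_0,v_0)}{\tfrac{\epsilon_2}{2}\, W_{G'_s}(u_0,v_0)} \;=\; \frac{4\epsilon_1}{\epsilon_2} \;=\; O(\epsilon_1/\epsilon_2),
\]
which is exactly the bound claimed by the lemma.

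The main obstacle I anticipate is making sure the monotonicity of $W_{G'_s}$ across the algorithm's processing order is correctly invoked: the set-off of $(u,q)$ and each subsequent improvement of $(u,q)$ are triggered by potentially different pairs in $S\times S$, so the key step is formally linking the ``budget'' $2\epsilon_1 W_{G'_s}(u_0,v_0)$ from the earlier pair with the ``rate'' $\tfrac{\epsilon_2}{2}W_{G'_s}(u',v')$ from the later pair via the sorted order. The rest is a telescoping/potential argument on $\Phi_i(u,q)$, which is essentially routine once the monotonicity is set up.
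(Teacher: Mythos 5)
Your proposal is correct and follows essentially the same approach as the paper: both arguments bound the post-set-off excess $d_{H}(u,q)-d_{G'_s}(u,q)$ by $2\epsilon_1 W_{G'_s}(u_0,v_0)$, observe that each subsequent improvement shrinks this excess by more than $\tfrac{\epsilon_2}{2}W_{G'_s}(u',v')$, and invoke the sorted processing order so that $W_{G'_s}(u',v')\ge W_{G'_s}(u_0,v_0)$, giving the $O(\epsilon_1/\epsilon_2)$ count. You have merely written out, via the explicit potential $\Phi_i$, the telescoping step that the paper leaves implicit.
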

\begin{proof}
    After the set-off of a particular pair $(u,q)$, 
    the shortest path distance between $u,q$ in the current graph is at most $2\epsilon_1 W_{G'_s}(u,v)$ away from their shortest path distance in $G'_s$. Now each improvement decreases the shortest path distance of $u,q$ in the current graph by at least $\frac{\epsilon_2}{2}W_{G'_s}(w,x)$ for some $w,x \in S$. Since the algorithm iterates in the increasing order of maximum edge weight on the shortest path in $G'_s$, the maximum number of improvements is $O(\epsilon_1/\epsilon_2)$.
\end{proof}

Let $\mathcal{H}$ be the spanner constructed above. We are now ready to bound its \lightness.

\begin{theorem}\label{thm:eps_spnr}
The \lightness of the spanner $\mathcal{H}$ is $O_\epsilon(|S|)$.
\end{theorem}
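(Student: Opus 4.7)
The plan is to bound the total weight of $\mathcal{H}$ in the scaled graph $G'_s$ by a charging argument against pairs of the form $(u,q)$ with $u\in S$ and $q\in V_{H_1}$, then unscale and divide by $weight(T)$ to obtain the \lightness bound. First I would observe that the initial graph $H_1=H_0\cup H'$ is cheap: by Lemma~\ref{lem:first_ngbr}, $weight(H_0)\le |V_H|$, and since the subdivision does not change total weight, $weight(H')=weight(H_s)\le |V_H|$; hence the initial contribution to $\mathcal{H}$ is $O(|V_H|)$ in $G'_s$.

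Next I would estimate the weight added after initialization. For each iteration where the algorithm processes a pair $(u,v)\in S\times S$ that fails the $+\epsilon W(\cdot,\cdot)$ condition and promotes $H_i$ to $H_{i+1}$, the added weight equals the total weight of the missing edges on the shortest $(u,v)$-path $P$ in $G'_s$. For every missing edge $e=(w,r)$ on $P$, Lemma~\ref{lem:neighborhood_all_weight} supplies a set $V_r\subseteq V_{H_1}$ of size $\Omega(weight(e))$ consisting of vertices within distance $weight(e)\le W_{G'_s}(u,v)$ of $w$ in $H_i$. Applying Lemma~\ref{lem:neighbor_improvement} with $\epsilon_1=\epsilon_2=\epsilon/4$ and $p=w$, each such $q\in V_r$ yields either a set-off of both $(u,q)$ and $(v,q)$, or, if one of these pairs is already set off, an improvement of at least one of them. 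Summing over all missing edges of $P$, the total added weight in this iteration is $O(1/\epsilon)$ times the number of set-off-plus-improvement events it produces.

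Now I would globally count those events. Each pair $(u,q)$ with $u\in S$ and $q\in V_{H_1}$ can be set off at most once; after its set-off Lemma~\ref{lem:number_of_improvements} bounds the number of improvements by $O(\epsilon_1/\epsilon_2)=O(1)$. Since $|V_{H_1}|=O(|V_H|)$ by Lemma~\ref{lem:first_ngbr} and the bound $|V_{H'}|\le 2|V_H|$, the total number of events charged is $O_\epsilon(|S|\cdot |V_H|)$. Combining with the initial cost, the weight of $\mathcal{H}$ in the scaled graph is $O_\epsilon(|S|\cdot |V_H|)$.

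Finally, I would unscale. Since the scaling factor is $|V_H|/weight(H)$, the weight of $\mathcal{H}$ in the original graph $G$ is $O_\epsilon(|S|\cdot|V_H|)\cdot weight(H)/|V_H|=O_\epsilon(|S|)\cdot weight(H)$. Because $weight(H)\le 2\,weight(T)$ by construction, dividing by $weight(T)$ yields a \lightness of $O_\epsilon(|S|)$. The main obstacle I anticipate is cleanly distinguishing the two Lemma~\ref{lem:neighbor_improvement} cases when the same vertex $q\in V_{H_1}$ lies near several edges of $P$ or near paths belonging to multiple iterations, so that no improvement is double-charged and every unit of added weight is matched to a fresh event; careful per-edge accounting (using the $\Omega(weight(e))$ disjoint witnesses furnished by Lemma~\ref{lem:neighborhood_all_weight}) is what makes this work.
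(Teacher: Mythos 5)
Your proposal follows essentially the same approach as the paper: bound the initial graph $H_1$ by $O(|V_H|)$, charge the weight added in each later iteration to set-offs and improvements of pairs in $S\times V_{H_1}$ via Lemmas~\ref{lem:neighborhood_all_weight},~\ref{lem:neighbor_improvement}, and~\ref{lem:number_of_improvements}, and conclude a scaled spanner weight of $O_\epsilon(|S||V_H|)$, which unscales to \lightness $O_\epsilon(|S|)$. The overlap issue you flag at the end is exactly the point the paper handles explicitly — it argues that if the combined neighborhoods of all missing edges on the path had total size $o(Z)$ (with $Z$ the weight added), one could form a $u$--$v$ path shorter than the shortest path $P$, a contradiction — so the only thing missing from your sketch is that one-line argument (and note $\epsilon_1=\epsilon_2=\epsilon/4$ gives $2\epsilon_1+\epsilon_2=3\epsilon/4$, not $\epsilon$; e.g.\ $\epsilon_1=\epsilon_2=\epsilon/3$ works).
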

\begin{proof}
    The weight of $H_0$ is less than or equal to $|S'|\leq|V_H|$. Hence the weight of $H_1$ is $O(|V_H|)$. In the next iterations, the algorithm adds missing edges of the shortest paths if the distance condition is unsatisfied. According to Lemma~\ref{lem:neighborhood_all_weight}, for each missing edge added (on the shortest path $P$) there is a neighborhood of size equal to the weight of the missing edge. 
    Let us assume that the total weight of the missing edges is $Z$. We claim a neighborhood of size $\Omega(Z)$. For the sake of contradiction let us assume that the neighborhood has size $o(Z)$. But then we have a shorter path than the shortest path $P$, a contradiction.
    
    According to Lemma~\ref{lem:neighbor_improvement}, each vertex in this $\Omega(Z)$-size neighborhood gets set-off or improvement. Hence the total edge weight added in the remaining iterations is equal to the total number of set-offs and improvements of vertex pairs from $S \times (V_H \cup V_{H_0})$. Since $|V_{H_0}|\leq 2|S'| = O(|V_H|)$, the number such vertex pairs is $O(|S||V_H|)$. Each of these pairs gets at most one set-off and $O(\epsilon_1/\epsilon_2)$ improvements after set-off. Hence the total count is $O(|S||V_H| + |S||V_H|\epsilon_1/\epsilon_2)$. Assuming $\epsilon_1$ and $\epsilon_2$ are constants, the total  spanner weight is 
    $O_\epsilon(|S||V_H|)$, where  $\epsilon=\epsilon_1/\epsilon_2$. 
\end{proof}

Note that the \lightness bound in Theorem~\ref{thm:eps_spnr} is tight because for a complete graph $G$ with edges of unit weight and $S$ equal to the vertex set, any $+\epsilon W(\cdot,\cdot)$ spanner, where $\epsilon\in (0,1]$, gets $\Omega(n^2)$ edges and thus a \lightness of  $\Omega(|S|)$. 

\subsection{A $+(4+\epsilon) W(\cdot, \cdot)$-spanner}

Some initial steps of our $+(4+\epsilon) W(\cdot, \cdot)$-spanner construction are the same as our $+\epsilon W(\cdot, \cdot)$-spanner construction, i.e., we construct the graphs $H$, $H_s$, $H'$, and $G'_s$. The construction of $H_0$ differs from that in the $+\epsilon W(\cdot, \cdot)$-spanner construction; in the $+\epsilon W(\cdot, \cdot)$-spanner construction, the vertex set of $H_0$ is $S'$ and we add at most one edge incident to each vertex in $S'$. In our $+(4+\epsilon) W(\cdot, \cdot)$-spanner, to construct $H_0$, we sort all the neighbors of $u$ in $V_H$ according to the increasing order of edge weights for each vertex $u \in S$. We add edges to $H_0$ until the total weight of edges added adjacent to $u$ is no more than $d$ (where $d$ is a parameter we will optimize later). We then construct $H_1$ by adding the edges of $H_0$ and $H_s$.

\begin{lemma}\label{lem:neighborhood_all_weight4}
    Let $P$ be a shortest path in $G_s$ between a pair of vertices $u, v \in S$. Let $e=(w,r)$ be an edge of $P$ that does not belong to $H_1$. Then there exists a set of vertices $V_r \subseteq V_{H_1}$ such that $|V_r| = \Omega(\sqrt{d})$ and for each vertex $q \in V_r, d_{H_1}(w,q) \leq weight(e)$.  
\end{lemma}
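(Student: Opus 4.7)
The proof mirrors that of Lemma~\ref{lem:neighborhood_all_weight}, but the neighborhood count now derives from the degree budget $d$ spent at each $u\in S$ during the construction of $H_0$, in place of the earlier subdivided tree $H'$. For concreteness I first consider $w\in S$; the case $w\in V_H\setminus S$ is handled symmetrically by transferring the argument to an incident Steiner-tree neighbor lying in $S$, or by invoking $H_s\subseteq H_1$ directly at $w$.

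By the construction of $H_0$, the neighbors of $w$ in $V_H$ are examined in increasing order of edge weight and added greedily until the running sum at $w$ first exceeds $d$. Let $k$ be the number of chosen edges at $w$ and let $W_k\le d$ be their total weight. Since $e\notin H_0$, each chosen edge incident to $w$ has weight at most $weight(e)$: either $r\in V_H$ and $(w,r)$ itself was a skipped candidate, or $r\notin V_H$ and the first skipped candidate already has weight at most $weight(e)$. The overflow condition forces $W_k>d-weight(e)$, which combined with $W_k\le k\cdot weight(e)$ yields
\begin{equation*}
k \;\ge\; \frac{W_k}{weight(e)} \;>\; \frac{d}{weight(e)} - 1.
\end{equation*}

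In the main case $weight(e)\le\sqrt{d}$, this already produces $k=\Omega(\sqrt{d})$ neighbors of $w$ that lie within distance $weight(e)$ of $w$ in $H_0\subseteq H_1$, so taking $V_r$ to be their set completes the argument. For $weight(e)>\sqrt{d}$ I plan to exploit $H_s\subseteq H_1$ instead: since $w\in V_H$ and the scaled Steiner tree $H_s$ has total weight $|V_H|$, a walk of length $weight(e)$ from $w$ in $H_s$ reaches $\Omega(\sqrt{d})$ distinct vertices of $V_H$, charged against the bound $|V_{H'}|\le 2|V_H|$ enjoyed by its subdivision $H'$ (each unit-length subdivision edge corresponds to a contribution of at most $1$ in $H_s$, so a length-$weight(e)>\sqrt{d}$ walk traverses $\Omega(\sqrt{d})$ subdivision units and therefore passes through $\Omega(\sqrt{d})$ vertices of $V_H$).

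The principal obstacle is the second case, since individual edges of $H_s$ can be heavy and a naive walk of length $weight(e)$ might cross only one such edge. The plan is a walk-and-charge argument in the spirit of Lemma~\ref{lem:remove_large_weight}: if fewer than $\Omega(\sqrt{d})$ vertices of $V_H$ lay within distance $weight(e)$ of $w$ in $H_s$, then the $V_H$-portion of $H_s$ close to $w$ would be so sparse that one could reroute the shortest path $P$ through $H_s$, contradicting either the scaling bound $weight(H_s)=|V_H|$ or the minimality of $P$ in $G_s$. Combining the two cases gives the required $V_r\subseteq V_{H_1}$ with $|V_r|=\Omega(\sqrt{d})$.
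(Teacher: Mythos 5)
Your case split matches the paper's: for $weight(e)\le\sqrt{d}$ you count neighbors in $H_0$ via the degree budget $d$, and for $weight(e)>\sqrt{d}$ you fall back to the Steiner-tree structure. The first case is actually argued more carefully than the paper's (the explicit overflow bound $W_k>d-weight(e)$ combined with $W_k\le k\cdot weight(e)$ is the right quantitative statement; the paper just asserts ``we added at least $\sqrt{d}$ neighbors''). One unjustified subclaim remains, though: you assert that if $r\notin V_H$ then ``the first skipped candidate already has weight at most $weight(e)$,'' but nothing in the construction of $H_0$ forces the $V_H$-neighborhood of $w$ to contain any light edge. In fact this subcase never arises: the algorithm only processes pairs that fail the spanner condition in the initial Steiner tree, and $S'$ contains every vertex on the fixed shortest paths of such pairs, so all internal vertices of $P$ (in particular $r$) lie in $S'\subseteq V_H$. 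You should state this rather than invent a fallback.

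The second case is where the proposal genuinely goes off the rails. You correctly identify that heavy edges of $H_s$ are a problem for a raw ball-growing argument, but you then set up a ``walk-and-charge'' plan to show that $\Omega(\sqrt{d})$ vertices \emph{of $V_H$} lie within distance $weight(e)$ of $w$ in $H_s$, and leave it as an open obstacle. This misses the point: the lemma only requires $V_r\subseteq V_{H_1}$, and $V_{H_1}$ contains all of $V_{H'}$, i.e., the subdivision vertices of the scaled Steiner tree. In $H'$ every edge has weight at most $1$ (by construction), and by Lemma~\ref{lem:remove_large_weight} together with the scaling $weight(H_s)=|V_H|$, the radius $weight(e)\le|V_H|=weight(H')$ is always available. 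So the ball of radius $weight(e)$ around $w$ in $H'\subseteq H_1$ already contains at least $weight(e)>\sqrt{d}$ vertices, with no charging argument needed. In short: you are free to use subdivision vertices in $V_r$, which dissolves the obstacle you flagged; insisting on vertices of $V_H$ makes the second case strictly harder than it needs to be and leaves a real gap in the proposal as written.
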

\begin{proof}
    If the weight of $e$ is no more than $\sqrt{d}$, then since $e$ is a missing edge we added at least $\sqrt{d}$ neighbors of $w$ in $H_0$. Let $V_r$ be those neighbors and then $V_r$ satisfies the desired property.
    Assume now that $weight(e) \geq \sqrt{d}$. By Lemma~\ref{lem:remove_large_weight}, $weight(e) \leq |V_H| \leq weight(H_s)$ 
    and $H'$ is obtained by subdividing the edges of $H_s$ such that the weight of each edge of $H'$ becomes at most one. Since $H_1$ contains all edges of $H'$, we can set $V_r$ to be the set of vertices that has a distance less than or equal to $weight(e)$ in $H_1$. Since the weight of edges of $H'$ is at most 1, $|V_r| = \Omega(\sqrt{d})$.
\end{proof}

We then iteratively consider the shortest paths between pairs of vertices of $S$ and for the pairs that do not satisfy the distance condition, we add edges (that are missing on the shortest path) to the current graph $H_i$. However, since we allow for a larger distance bound for $+(4+\epsilon) W(\cdot, \cdot)$-spanners, our goal is to show that a larger number of pairs of vertices of $G'_s$ gets set off at each iteration. The following lemma would help achieve this goal.

\begin{lemma}\label{lem:neighbor_improvement_6}
  
    Let $H_i$ be a subgraph of $G'_s$ and let $P$ be the shortest path between $u$ and $v$ in $G'_s$.  Let $H_{i+1}$ be the graph obtained from $H_i$ by adding the vertices and edges of $P$ that were missing in $H_i$. Let $p$ be a vertex in $P$ and $q$ be a vertex such that the distance $d_{H_i}(p,q) \leq \epsilon_1 W_{G'_s}(u,v)$ where $\epsilon_1 > 0$.
    Let $r$ and $s$ be two vertices such that $d_{H_i}(u,r) \leq W_{G'_s}(u,v)$ and $d_{H_i}(v,s) \leq W_{G'_s}(u,v)$,  respectively  (see  e.g., Figure~\ref{fig:lemsetup}(b)). 
    If $d_{H_i}(u,v) > d_{G'_s}(u,v) + (4+\epsilon)W_{G'_s}(u,v)$, where $\epsilon = 2\epsilon_1 + \epsilon_2$ and $\epsilon_2>0$,  
      then the following hold.
    \begin{enumerate}
        \item $d_{H_{i+1}}(r,q) \leq d_{G'_s}(r,q) + (2 + 2\epsilon_1) W_{G'_s}(u,v)$ and $d_{H_{i+1}}(s,q) \leq d_{G'_s}(s,q) + (2 + 2\epsilon_1) W_{G'_s}(u,v)$.
        \item Either $d_{H_i}(r, q) - d_{H_{i+1}}(r, q) > \frac{\epsilon_2}{2}W_{G'_s}(u,v)$ or $d_{H_i}(s, q) - d_{H_{i+1}}(s, q) > \frac{\epsilon_2}{2}$ $ W_{G'_s}(u,v)$.
    \end{enumerate}
     
\end{lemma}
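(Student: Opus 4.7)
The plan is to mirror the structure of the proof of Lemma~\ref{lem:neighbor_improvement}, but to route distances from $r$ and $s$ through $u$ and $v$ respectively before reaching the vertex $q$ via the shortest path $P$. The extra two triangle-inequality hops (one through $r \to u$ and one through $v \to s$) are precisely what turn the previous additive bound $2\epsilon_1 W_{G'_s}(u,v)$ into $(2+2\epsilon_1)W_{G'_s}(u,v)$ and the previous $+\epsilon$ guarantee into a $+(4+\epsilon)$ guarantee. Throughout, I will use that $H_i \subseteq H_{i+1} \subseteq G'_s$ (so $G'_s$-distances lower bound, and $H_i$-distances upper bound, the $H_{i+1}$-distances), and that for any vertex $x$ on $P$, the path $P$ stays intact in $H_{i+1}$ so $d_{H_{i+1}}(u,x) = d_{G'_s}(u,x)$ and similarly for $v$.

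For item 1, I would chain $d_{H_{i+1}}(r,q) \le d_{H_{i+1}}(r,u) + d_{H_{i+1}}(u,p) + d_{H_{i+1}}(p,q)$, bound the first term by $W_{G'_s}(u,v)$ (using $d_{H_i}(u,r) \le W_{G'_s}(u,v)$), the middle term by $d_{G'_s}(u,p)$ (since $P$ survives in $H_{i+1}$), and the last by $\epsilon_1 W_{G'_s}(u,v)$. To convert the $d_{G'_s}(u,p)$ summand into $d_{G'_s}(r,q)$, I would apply the triangle inequality in $G'_s$ the other direction, namely $d_{G'_s}(u,p) \le d_{G'_s}(u,r) + d_{G'_s}(r,q) + d_{G'_s}(q,p)$, and use $d_{G'_s}(u,r) \le d_{H_i}(u,r) \le W_{G'_s}(u,v)$ together with $d_{G'_s}(q,p) \le d_{H_i}(q,p) \le \epsilon_1 W_{G'_s}(u,v)$. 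Adding up, the coefficients of $W_{G'_s}(u,v)$ become $1 + 1 + \epsilon_1 + \epsilon_1 = 2 + 2\epsilon_1$, as required. The bound for $s$ is symmetric, obtained by swapping the roles of $u,r$ with $v,s$.

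For item 2, I would argue by contrapositive: assume both $d_{H_i}(r,q) - d_{H_{i+1}}(r,q) \le \tfrac{\epsilon_2}{2} W_{G'_s}(u,v)$ and $d_{H_i}(s,q) - d_{H_{i+1}}(s,q) \le \tfrac{\epsilon_2}{2} W_{G'_s}(u,v)$ hold, and derive a contradiction with the hypothesis $d_{H_i}(u,v) > d_{G'_s}(u,v) + (4+\epsilon) W_{G'_s}(u,v)$. Starting from $d_{H_i}(u,v) \le d_{H_i}(u,r) + d_{H_i}(r,q) + d_{H_i}(q,s) + d_{H_i}(s,v)$, I would replace each $d_{H_i}(\cdot,q)$ term by its $H_{i+1}$ counterpart plus $\tfrac{\epsilon_2}{2} W_{G'_s}(u,v)$, and bound $d_{H_i}(u,r)$ and $d_{H_i}(v,s)$ by $W_{G'_s}(u,v)$ each. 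Then I would expand $d_{H_{i+1}}(r,q)$ and $d_{H_{i+1}}(q,s)$ the same way as in item 1 (through $p$ on $P$), obtaining a bound of the form $d_{G'_s}(u,p) + d_{G'_s}(p,v) + (4 + 2\epsilon_1 + \epsilon_2)W_{G'_s}(u,v)$. Since $p$ lies on the shortest path $P$, $d_{G'_s}(u,p) + d_{G'_s}(p,v) = d_{G'_s}(u,v)$, which collapses the bound to $d_{G'_s}(u,v) + (4+\epsilon) W_{G'_s}(u,v)$, contradicting the hypothesis.

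The main obstacle is essentially bookkeeping: tracking which distances are measured in $G'_s$, $H_i$, or $H_{i+1}$, and pushing each triangle inequality in the correct direction so that the $W_{G'_s}(u,v)$ coefficients accumulate to exactly $4 + 2\epsilon_1 + \epsilon_2 = 4 + \epsilon$. The only genuinely new step compared with Lemma~\ref{lem:neighbor_improvement} is the use of the two hypotheses $d_{H_i}(u,r), d_{H_i}(v,s) \le W_{G'_s}(u,v)$, which account for the additional $2W_{G'_s}(u,v)$ slack and reflect that $r$ and $s$ represent ``nearby'' vertices reached through the enriched neighborhood $H_0$ of the $+(4+\epsilon)$-spanner construction.
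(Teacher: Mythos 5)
Your proposal is correct and mirrors the paper's own argument step for step: item 1 goes $r \to u \to p \to q$ in $H_{i+1}$, converts $d_{G'_s}(u,p)$ to $d_{G'_s}(r,q)$ via the reverse triangle inequality in $G'_s$, and tallies the four $W_{G'_s}(u,v)$ contributions to $(2+2\epsilon_1)$; item 2 is the same contrapositive via the four-leg decomposition $u \to r \to q \to s \to v$ and the collapse $d_{G'_s}(u,p)+d_{G'_s}(p,v)=d_{G'_s}(u,v)$. Nothing is missing and the bookkeeping is identical to the paper's.
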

\begin{proof} 
    Using triangle inequality, we can show the following:
    \begin{align*} 
    d_{H_{i+1}}(r,q) & \leq d_{H_{i+1}}(r,u) + d_{H_{i+1}}(u,p) + d_{H_{i+1}}(p,q) \\
                     & \leq W_{G'_s}(u,v) + d_{H_{i+1}}(u,p) + \epsilon_1 W_{G'_s}(u,v) \\
                     & = d_{G'_s}(u,p) + (1+\epsilon_1) W_{G'_s}(u,v) \\
                     & \leq d_{G'_s}(u,r) + d_{G'_s}(r,q) + d_{G'_s}(q,p) + (1+\epsilon_1) W_{G'_s}(u,v) \\
                     & \leq W_{G'_s}(u,v) + d_{G'_s}(r,q) + d_{H_{i+1}}(q,p) + (1+\epsilon_1) W_{G'_s}(u,v) \\
                     & \leq W_{G'_s}(u,v) + d_{G'_s}(r,q) + \epsilon_1 W_{G'_s}(u,v) + (1+\epsilon_1) W_{G'_s}(u,v) \\
                     & = d_{G'_s}(r,q) + (2+2\epsilon_1) W_{G'_s}(u,v)
    \end{align*}

    Similarly, we can show that $d_{H_{i+1}}(s,q) \leq d_{G'_s}(s,q) + (2+2\epsilon_1) W_{G'_s}(u,v)$. If $d_{H_i}(r, q) - d_{H_{i+1}}(r, q) \leq \frac{\epsilon_2}{2}W_{G'_s}(u,v)$ and $d_{H_i}(s, q) - d_{H_{i+1}}(s, q) \leq \frac{\epsilon_2}{2}W_{G'_s}(u,v)$, then

    \begin{align*}
    d_{H_i}(u,v) & \leq d_{H_i}(u,r) + d_{H_i}(r,q) + d_{H_i}(q,s) + d_{H_i}(s,v) \\
                 & \leq W_{G'_s}(u,v) + d_{H_{i+1}}(r,q) + \frac{\epsilon_2}{2}W_{G'_s}(u,v) + d_{H_{i+1}}(q,s) + \frac{\epsilon_2}{2}W_{G'_s}(u,v) + W_{G'_s}(u,v) \\
                 & = d_{H_{i+1}}(r,q) + d_{H_{i+1}}(q,s) + (2+\epsilon_2) W_{G'_s}(u,v) \\
                 & \leq d_{H_{i+1}}(r,u) + d_{H_{i+1}}(u,p) + d_{H_{i+1}}(p,q) + d_{H_{i+1}}(q,p) + d_{H_{i+1}}(p,v) \\
                 & + d_{H_{i+1}}(v,s) + (2+\epsilon_2) W_{G'_s}(u,v) \\
                 & = d_{H_{i+1}}(r,u) + d_{H_{i+1}}(u,p) + d_{H_{i+1}}(p,v) + d_{H_{i+1}}(v,s) + d_{H_{i+1}}(p,q) \\
                 & + d_{H_{i+1}}(q,p) + (2+\epsilon_2) W_{G'_s}(u,v) \\
                 & = d_{H_{i+1}}(r,u) + d_{H_{i+1}}(u,p) + d_{H_{i+1}}(p,v) + d_{H_{i+1}}(v,s) + 2d_{H_{i+1}}(p,q) + (2+\epsilon_2) W_{G'_s}(u,v) \\
                 & \leq W_{G'_s}(u,v) + d_{H_{i+1}}(u,p) + d_{H_{i+1}}(p,v) + W_{G'_s}(u,v) + 2\epsilon_1 W_{G'_s}(u,v) + (2+\epsilon_2) W_{G'_s}(u,v) \\
                 & = d_{H_{i+1}}(u,p) + d_{H_{i+1}}(p,v) + (4 + 2\epsilon_1 + \epsilon_2)W_{G'_s}(u,v) \\
                 & = d_{G'_s}(u, p) + d_{G'_s}(p, v) + (4 + 2\epsilon_1 + \epsilon_2)W_{G'_s}(u,v) \\
                 & = d_{G'_s}(u, v) + (4 + \epsilon)W_{G'_s}(u,v) 
    \end{align*}

    However, we assumed $d_{H_i}(u,v) > d_{G'_s}(u,v) + (4 + \epsilon)W_{G'_s}(u,v)$. Hence either $d_{H_i}(r, q) - d_{H_{i+1}}(r, q) > \frac{\epsilon_2}{2}W_{G'_s}(u,v)$ or $d_{H_i}(s, q) - d_{H_{i+1}}(s, q) > \frac{\epsilon_2}{2}W_{G'_s}(u,v)$. 



\end{proof}

After computing $H_1$ and $G'_s$, our algorithm sorts the pairs of vertices in $S$ based on their increasing order of maximum edge weight in the shortest path on $G'_s$; If multiple pairs have the same maximum edge weight, then the algorithm selects the pair with the smaller shortest path length. Then it checks whether the spanner condition is unsatisfied for pairs of vertices according to the sorted order. Let $H_i$ be the current subgraph of $G'_s$. If the condition is unsatisfied, then the algorithm computes $H_{i+1}$ by adding the missing edges at $H_i$ of the shortest path of the unsatisfied pair at $G'_s$. Once all the unsatisfied pairs are considered, we have a valid spanner. 
By Lemma~\ref{lem:neighbor_improvement_6}, 
at each iteration, the shortest path distances from close neighbors of $u$ and $v$ to a set of other vertices become comparable with the shortest path distance of $G'_s$, where $u,v\in S $ are the vertices being considered in that iteration. In other words, for each vertex pairs $(r, q)$ and $(s, q)$ satisfying the conditions of the lemma, $d_{H_{i+1}}(r,q) \leq d_{G'_s}(r,q) + (2+2\epsilon_1) W_{G'_s}(u,v)$ and $d_{H_{i+1}}(s,q) \leq d_{G'_s}(s,q) + (2+2\epsilon_1) W_{G'_s}(u,v)$. We say that the pairs $(r,q)$ and $(s,q)$ get \textit{set-off} when the shortest paths of the pairs become comparable with respect to the shortest path in $G'_s$ for the first time. Also, according to the lemma, the shortest paths get smaller either for $(r,q)$ or for $(s,q)$: $d_{H_i}(r, q) - d_{H_{i+1}}(r, q) > \frac{\epsilon_2}{2}W_{G'_s}(u,v)$ or $d_{H_i}(s, q) - d_{H_{i+1}}(s, q) > \frac{\epsilon_2}{2}W_{G'_s}(u,v)$. We say either the pair $(r,q)$ or $(s,q)$ gets an \textit{improvement}. Similar to Lemma~\ref{lem:number_of_improvements}, we can show the total number of improvements of a pair after set-off is $O(\epsilon_1/\epsilon_2)$.

\begin{theorem}\label{thm:4spnr}
The \lightness of the spanner is $O_\epsilon(|V_H|^{1/3} |S|^{1/3})$.
\end{theorem}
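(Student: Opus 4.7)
\medskip
\noindent\textbf{Proof plan.} The plan is to decompose the spanner weight into the initial contribution $weight(H_1)$ and the incremental weight $Z$ added during the greedy iterations, bound each in terms of the parameter $d$, and then optimize.

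First, I would bound $weight(H_1)$. By construction, each vertex $u \in S$ contributes incident edges whose total weight in $H_0$ is at most $d$, so $weight(H_0) \le |S|\,d$. Adding the edges of the (subdivided) Steiner tree of weight $|V_H|$ yields $weight(H_1) = O(|V_H| + |S|\,d)$.

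Second, I would bound $Z$ via a counting argument that mirrors the proof of Theorem~\ref{thm:eps_spnr} but with a stronger per-iteration yield enabled by the additional $+4W(\cdot,\cdot)$ slack. Fix an iteration processing $(u,v)$ that adds missing edges of total weight $Z_i$ along the shortest path $P$ in $G'_s$. As in the proof of Theorem~\ref{thm:eps_spnr}, combining Lemma~\ref{lem:neighborhood_all_weight4} with the unit-weight subdivision structure of $H'$ produces $\Omega(Z_i)$ vertices $q$ in $H_1$, each within distance $\epsilon_1 W_{G'_s}(u,v)$ of some vertex $p \in P$. The new ingredient from the construction of $H_0$ is that the sorted-order rule guarantees $\Omega(\sqrt d)$ close neighbors of $u$ (and similarly of $v$) in $H_1$ within distance $W_{G'_s}(u,v)$: if the first missing edge on $P$ incident to $u$ has weight at most $\sqrt d$, then $H_0$ has already absorbed $\Omega(\sqrt d)$ lighter edges incident to $u$; otherwise, the required close neighbors are supplied by the unit-weight subdivision edges of $H'$ itself. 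By Lemma~\ref{lem:neighbor_improvement_6}, each pair $(r,q)$ (or $(s,q)$) obtained by combining these two sets receives either a set-off or an improvement, contributing $\Omega(Z_i \sqrt d)$ events in this iteration.

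Summing over iterations and invoking an analogue of Lemma~\ref{lem:number_of_improvements}, every distinct pair in $V_{H_1}\times V_{H_1}$ accumulates at most one set-off and $O(\epsilon_1/\epsilon_2)$ improvements throughout the run. Since $|V_{H_1}| = O(|V_H|)$, the total number of events is $O_\epsilon(|V_H|^2)$, so $Z\sqrt d = O_\epsilon(|V_H|^2)$, i.e., $Z = O_\epsilon(|V_H|^2/\sqrt d)$. The total spanner weight is therefore $O_\epsilon(|V_H| + |S|\,d + |V_H|^2/\sqrt d)$; balancing the two parametric terms by setting $d = \Theta((|V_H|^2/|S|)^{2/3})$ yields total weight $O_\epsilon(|S|^{1/3}|V_H|^{4/3})$, and dividing by $\Theta(|V_H|) = \Theta(weight(H))$ gives \lightness $O_\epsilon(|V_H|^{1/3}|S|^{1/3})$.

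The main obstacle I anticipate is justifying the $\Omega(\sqrt d)$ close-neighbor count for $u$ and $v$ at distance $W_{G'_s}(u,v)$ inside $H_1$: this requires a careful case split on the weight of the first missing edge on $P$ relative to $\sqrt d$, and a reconciliation between the budget-driven neighbors coming from $H_0$ and the unit-weight subdivision neighbors coming from $H'$, so that at least one of the two mechanisms always yields enough close neighbors to pair with the path-side witnesses from Lemma~\ref{lem:neighborhood_all_weight4}.
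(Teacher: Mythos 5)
Your proposal reproduces the paper's proof essentially step for step: the decomposition into $weight(H_1)=O(|V_H|+|S|d)$ plus the incremental weight $Z$, the $\Omega(\sqrt{d})$ close-neighbor count from Lemma~\ref{lem:neighborhood_all_weight4}, the pairing with path-side witnesses via Lemma~\ref{lem:neighbor_improvement_6} to charge $\Omega(Z\sqrt{d})$ set-off/improvement events against the $O_\epsilon(|V_H|^2)$ budget, and the balancing choice $d=\Theta(|V_H|^{4/3}/|S|^{2/3})$. The only point you leave implicit that the paper makes explicit is why the first and last edges of $P$ are guaranteed to be missing from $H_i$ (so that Lemma~\ref{lem:neighborhood_all_weight4} applies to $u$ and $v$ directly); the paper derives this from the tie-breaking rule that, among pairs with equal $W(\cdot,\cdot)$, the one with the shorter shortest path is processed first.
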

\begin{proof}
    By the construction of $H_1$,  $weight(H_1) = O(|V_H| + |S|d)$.
    In the next iterations, the algorithm adds missing edges of the shortest paths if the distance condition is unsatisfied. According to Lemma~\ref{lem:neighborhood_all_weight}, for each missing edge added, there is a neighborhood of size equal to the weight of the missing edge.
    According to Lemma~\ref{lem:neighbor_improvement_6}, each vertex pair formed from these neighborhoods and the neighborhoods of the endpoints of the shortest path get set-off or improvements. Let the missing edges of the shortest path between $u,v \in S$ be added when we construct $H_{i+1}$ from $H_i$. Let $(u, u')$ and $(v',v)$ be the shortest path edges. Notice that, neither $(u, u')$ nor $(v, v')$ are present in $H_i$, otherwise the algorithm would not select the pair $u,v$ (and we would select the pair $u', v'$ since their shortest path is shorter). According to Lemma~\ref{lem:neighborhood_all_weight4}, there are $\Omega(\sqrt{d})$ vertices that form a pair with the neighborhood of size equal to weight of missing edges $z_i$ in the shortest path such that each of these pairs are either get set-off or improvements. Hence there are at least $\sqrt{d} Z$ improvements where $Z=\sum_i z_i$ is the total weights added in the iterations of the algorithms after constructing $H_1$. On the other hand, the size of the set of vertex pairs that can get set-off or improvements is $O(|V_H|^2)$. Each of these pairs gets at most one set-off and $O(\epsilon_1/\epsilon_2)$ improvements after set-off. If we set up $\epsilon = \epsilon_1 /\epsilon_2$, then $Z=O_\epsilon(|V_H|^2/\sqrt{d})$. Hence the total weight of the spanner is $O_\epsilon(|V_H|^2/\sqrt{d} + |S|d)$. 
    If we set $d = |V_H|^{4/3} / (|S|)^{2/3}$, then the total weight of the spanner is $O_\epsilon(|V_H|^{4/3} |S|^{1/3})$. Hence, the \lightness of this spanner is $O_\epsilon(|V_H|^{1/3} |S|^{1/3})$.
\end{proof}

\subsection{A Sampling-based $+(4+\epsilon) W_{\max}$-spanner}
The initial steps of this algorithm are similar to the previous algorithm.
We first compute the Steiner tree $H$, scaled graph $G_s$, subdivided tree $H'$, initial graph $H_0$, add $H'$ to $H_0$ to achieve $H_1$, subdivide Steiner tree edges in the input graph to achieve $G'_s$. We now consider each pair of vertices $u,v \in S$. Let $P$ be the shortest path between $u, v$ in $G'_s$. Let $x$ be the total weight of missing edges in $H_1$ considering the edges of $P$. If $x<\ell$, we then add all the missing edges where $\ell$ is a parameter we will set later. The total weight added in this step is no more than $|S|^2\ell$.

To handle vertex pairs with $x \geq \ell$, we consider the subpath in $H_1$ at the beginning of $P$ that has missing edges of weight $\ell$. We refer to that subpath as the prefix and similarly, we define the suffix. We then uniformly sample $c \ln n |V_H|/\ell$ vertices from $V_H$ where $c$ is a small constant. We also add the missing edges of prefixes and suffixes. If the prefix and suffix overlap, we then add all the missing edges of the shortest path; otherwise, both prefix and suffix have a neighborhood of size $\ell$. Now the probability that the sampled vertices will not hit the neighborhood of suffix and prefix is equal to $(1-\ell/|V_H|)^{c \ln n |V_H|/\ell} \leq 1/n^c$. Hence with a high probability, the sampled vertices will hit the neighborhood of the suffix and prefix.

We then compute a $+\epsilon W(\cdot, \cdot)$-spanner on the sampled vertices and add the spanner edges in $H_1$. 
We can show that we will achieve a valid $+(4+\epsilon) W_{\max}$-spanner with high probability. 
Consider a vertex pair $u,v \in S$ for which we haven't added all the missing edges. Let $p, q$ be two vertices in the prefix and suffix of $P$ respectively, such that sampled vertices hit their neighborhood. Let $r$ and $s$ be the corresponding neighbor vertices of $p$ and $q$ respectively. Then 
\begin{align*}
d_{H_1}(u,v) & \leq d_{H_1}(u,p) + d_{H_i}(p, r) + d_{H_i}(r, s) + d_{H_i}(s, q) + d_{H_1}(q,v) \\
 & \leq d_{G}(u,p) + W_{\max} + d_{G}(r, s) +\epsilon W_{\max} + W_{\max} + d_{G}(q,v)  \\
 & = d_{G}(u,p) + d_{G}(r, s) + d_{G}(q,v) + (2+\epsilon)W_{\max}  \\
 & \leq d_{G}(u,p) + d_{G}(r, p) + d_{G}(p, q) + d_{G}(q, s) + d_{G}(q,v) + (2+\epsilon)W_{\max} \\ 
 & \leq d_{G}(u,p) + W_{\max} + d_{G}(p, q) + W_{\max} + d_{G}(q,v) + (2+\epsilon)W_{\max}  \\
 & = d_{G}(u,p) + d_{G}(p, q) + d_{G}(q,v) + (4+\epsilon)W_{\max}  \\
 & = d_{G}(u,v) + (4+\epsilon)W_{\max} 
\end{align*}

The weight of $+\epsilon W(\cdot, \cdot)$-spanner on $c \ln n |V_H|/\ell$ sampled vertices can be computed using Theorem~\ref{thm:eps_spnr}. 
Let $V'_H$ be the Steiner tree spanning the shortest path vertices of the sampled vertices. Then the weight of the $+\epsilon W(\cdot, \cdot)$-spanner is $c \ln n |V_H||V'_H|/\ell$. Hence the total edge weight of $H_1$ is $O(|S|^2\ell + c \ln n |V_H||V'_H|/\ell)$. If we set $\ell = \sqrt{c \ln n |V_H||V'_H|}/|S|$, then 
$weight(H_1)$ is $O(|S|\sqrt{c \ln n |V_H||V'_H|})$. 
Hence the \lightness of the spanner is $\tilde{O}_\epsilon(|S|\sqrt{|V'_H|/|V_H|})$. Notice that we can not assume that $|V'_H|$ is arbitrarily small since it depends on $\ell$. We can run a binary search on $(0, |V_H|]$ and select an $\ell$ that is approximately $\sqrt{c \ln n |V_H||V'_H|}/|S|$. If we don't find such an $\ell$ we can compute a $+\epsilon W(\cdot, \cdot)$-spanner on $S$.
\begin{theorem}\label{thm:4spnr_sampled}
The \lightness of the spanner is $\tilde{O}_\epsilon(|S|\sqrt{|V'_H|/|V_H|})$.
\end{theorem}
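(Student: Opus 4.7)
The plan is to bound the weight of the final spanner as the sum of three contributions and then optimize over the threshold $\ell$. First, by the preliminary setup, $weight(H_1)=O(|V_H|)$. Second, the explicit step that adds all missing edges for pairs $(u,v)\in S\times S$ whose shortest path in $G'_s$ has missing weight $x<\ell$ contributes at most $|S|^2\ell$, since there are at most $|S|^2$ such pairs and each contributes weight strictly less than $\ell$. Third, the sampling step contributes the weight of a $+\epsilon W(\cdot,\cdot)$-spanner built on $c\ln n\cdot |V_H|/\ell$ sampled vertices, which by Theorem~\ref{thm:eps_spnr} is $\tilde{O}_\epsilon\!\bigl(|V_H|\,|V'_H|/\ell\bigr)$, where $V'_H$ is the vertex set of the Steiner tree associated with the sample.

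For correctness of the third contribution, I would formalize the probabilistic argument sketched in the excerpt. For each pair $(u,v)\in S\times S$ with $x\ge\ell$ whose prefix and suffix of missing edges do not overlap, both the prefix and suffix contain a connected neighborhood of total weight $\ell$. The probability that the uniform sample from $V_H$ misses a specific such neighborhood is at most $(1-\ell/|V_H|)^{c\ln n\cdot |V_H|/\ell}\le n^{-c}$; union-bounding over all $O(|S|^2)=O(n^2)$ pairs shows that, for $c$ large enough, every such prefix and suffix contains a sampled vertex with high probability. The triangle-inequality chain already displayed just before the theorem, applied to the sampled vertices $r$ and $s$ together with the $+\epsilon W(\cdot,\cdot)$-spanner guarantee between them, then yields $d_{H_1}(u,v)\le d_G(u,v)+(4+\epsilon)W_{\max}$ for every previously unsatisfied pair, establishing that the output is a valid $+(4+\epsilon)W_{\max}$-spanner.

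Combining the three contributions gives a total spanner weight of
\[
O(|V_H|)+O(|S|^2\ell)+\tilde{O}_\epsilon\!\left(\frac{|V_H|\,|V'_H|}{\ell}\right).
\]
Balancing the second and third terms by setting $\ell=\Theta\!\bigl(\sqrt{|V_H|\,|V'_H|}/|S|\bigr)$ yields total weight $\tilde{O}_\epsilon\!\bigl(|S|\sqrt{|V_H|\,|V'_H|}\bigr)$. Dividing by $weight(H)=\Theta(|V_H|)$ (after accounting for the weight scaling via Lemma~\ref{lem:weight_scaling}) gives \lightness $\tilde{O}_\epsilon\!\bigl(|S|\sqrt{|V'_H|/|V_H|}\bigr)$, as claimed.

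The main obstacle is that $|V'_H|$ is not a parameter fixed in advance: the sample size, and therefore the Steiner tree spanning the sample's pairwise shortest-path vertices, depends on the choice of $\ell$, so one cannot literally plug the optimizing value of $\ell$ into a closed-form expression for $|V'_H|$. To handle this I would use the binary search over $\ell\in(0,|V_H|]$ indicated in the excerpt: compute the induced $|V'_H|$ for each candidate $\ell$ and accept when the fixed-point relation $\ell\approx\sqrt{|V_H|\,|V'_H|}/|S|$ is (approximately) satisfied; if no such $\ell$ exists in the search range, fall back to the pure $+\epsilon W(\cdot,\cdot)$-spanner on $S$ from Theorem~\ref{thm:eps_spnr}, whose \lightness $O_\epsilon(|S|)$ is already within the advertised bound. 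The remaining verification is routine: a union bound to convert the per-pair high-probability correctness into a high-probability guarantee over all pairs, and an application of Lemma~\ref{lem:weight_scaling} to carry the additive guarantee from $G'_s$ back to $G$.
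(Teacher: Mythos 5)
Your proposal follows essentially the same route as the paper: the same three-way weight decomposition ($H_1$, the $|S|^2\ell$ contribution from low-missing-weight pairs plus prefixes/suffixes, and the $+\epsilon W(\cdot,\cdot)$-spanner on the sample), the same sampling probability and triangle-inequality argument, the same balancing choice of $\ell$, and the same binary-search/fallback remark. Your addition of an explicit union bound over the $O(n^2)$ pairs is a small tightening of the paper's sketch, not a different argument.
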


\section{Multi-Level Spanners}\label{sec:multi-level}

In this section, we study a generalization of the subsetwise spanner problem. We now have a hierarchy of $\ell$ vertex subsets $S_{\ell} \subseteq S_{\ell-1} \subseteq \cdots \subseteq S_1$ where $S_1 \subseteq V$. For each level $i$, we compute a (multiplicative or additive) subsetwise spanner $G_i = (V_i, E_i)$ where the edge sets of these spanners are also nested, e.g., $E_{\ell} \subseteq E_{\ell-1} \subseteq \cdots \subseteq E_1$. This problem is called the multi-level spanner problem. We provide an $e$-approximation algorithm for the multi-level spanner problem which is better than the existing $4$-approximation~\cite{ahmed2023multi}. 

The existing $4$-approximation is achieved by first generating a rounding-up instance where each subset is rounded up to a level that is the nearest power of 2, e.g., $2^0, 2^1, \cdots$. Then a spanner is computed independently for each rounded-up level and merged together to achieve a spanner for the edge level. The merging step provides a feasible solution that is a $2$-approximation of the rounded-up instance. On the other hand, the rounding-up step can increase the cost by at most two times the input instance. Overall, the algorithm is thus a $4$-approximation.

We generalize this algorithm by introducing two parameters, $p$ and $q$. Instead of starting from level $2^0$, we start from level $p^q$; and instead of considering the levels $2^0, 2^1, \cdots$, we consider the levels $p^q, p^{q+1}, \cdots$ for rounding up. Our algorithm is randomized, and we show the expected cost of the output with respect to $q$. We then select the best value of $p$ to achieve the desired approximation ratio of the algorithm.

Our approach has two main steps: the first step rounds up the level of all terminals to a level equal to $p^{q+i}$ where $p>1$, $q \in (0,1]$, and $i$ is a nonnegative integer; the second step computes a solution independently for each rounded-up level and merges all solutions from the highest level to the lowest level. We show that the first step can make the solution at most $\frac{p-1}{\ln p}$ times worse than the optimal solution, and the second step can make the solution at most $\frac{1}{1-1/p}$ times worse than the optimal solution. By optimizing $p$, we achieve an $e$-approximation algorithm. 
We provide the pseudocode of the algorithm below, here $\ell$ is the largest integer such that $p^{q+\ell}\leq k$.

\begin{algorithm}
\caption{\textbf{Algorithm} $k$-level Approximation($G = (V, E)$)}
\begin{algorithmic}
\State // Round up the levels
\For{each terminal $v \in V$}
\State Round up the level of $v$ to the nearest power of the form ${p^{q+i}}$
\EndFor
\State // Independently compute the solutions
\State {Let $\partition_{p^q}, \partition_{p^{q+1}}, \partition_{p^{q+2}}, \cdots, \partition_{p^{q+\ell}}$ be the rounded-up terminal subsets}
\For{each subset $\partition_{p^{q+i}}$}
\State Compute a $1$-level solution on $\partition_{p^{q+i}}$
\EndFor
\State // Merge the independent solutions
\For{$j \in \{p^{q+\ell}, p^{q+\ell-1}, \cdots, p^{q}\}$}
\State Merge the solution of $\partition_j$ to the solutions of lower levels
\EndFor
\end{algorithmic}
\end{algorithm}

The cost of the multi-level solution may increase due to rounding up and the merging steps. The following results provide the costs of these steps.

\begin{lemma}\label{lem:round-up}
    The expected rounding cost is $\frac{p-1}{\ln p} \text{weight}(\OPT)$.
\end{lemma}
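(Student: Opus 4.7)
The plan is to bound the expected increase in cost for a single edge of $\OPT$ under the random rounding, then conclude by linearity of expectation. I treat $q$ as drawn uniformly from $(0,1]$, which is the natural distribution that produces the claimed constant $\frac{p-1}{\ln p}$.

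First, I would decompose the optimum edge by edge. For each edge $e$ appearing in some spanner of $\OPT$, let $\ell(e)$ be the highest level at which $e$ is used and let $w(e)$ be its weight. By the nesting $E_\ell \subseteq \cdots \subseteq E_1$, the edge $e$ contributes $w(e)$ to every spanner at levels $1,\dots,\ell(e)$, so $\text{weight}(\OPT)=\sum_e w(e)\,\ell(e)$. After the rounding step, the solution in which each edge $e$ is promoted to level $\ell'(e) := p^{q+i(e)}$, with $i(e) = \max(0,\lceil \log_p \ell(e) - q\rceil)$, remains feasible for the rounded-up instance, so the rounding cost is at most $\sum_e w(e)\,\ell'(e)$.

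Next, I would compute $E_q[\ell'(e)]$ for a single edge with $\ell(e) = L \geq 1$, writing $y = \log_p L \geq 0$. Substituting $t = y - q$ makes $t$ uniform on the unit-length interval $[y-1,y)$. Assuming first that $y \geq 1$, the max in $i(e)$ is inactive and $\ell'(e) = p^{y+u}$ with $u := \lceil t \rceil - t$. Because the ``distance-to-the-next-integer-above'' function is uniform on $[0,1)$ whenever its input ranges uniformly over a unit interval, this gives
\[
E_q\bigl[\ell'(e)\bigr] \;=\; L\int_0^1 p^u\,du \;=\; L\cdot\frac{p-1}{\ln p}.
\]
For the boundary range $y<1$, a direct two-piece integration splitting $(0,1]$ at $q = y$ (below $y$ we have $i(e)=1$, above $y$ we have $i(e)=0$) yields the same value $L\cdot\frac{p-1}{\ln p}$.

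Finally, linearity of expectation will give
\[
E\bigl[\text{rounded cost}\bigr] \;\leq\; \sum_e w(e)\,E_q\bigl[\ell'(e)\bigr] \;=\; \frac{p-1}{\ln p}\sum_e w(e)\,\ell(e) \;=\; \frac{p-1}{\ln p}\,\text{weight}(\OPT),
\]
which is exactly the stated bound. The main technical step is the per-edge integral: one must justify the uniformity of $u=\lceil t\rceil-t$ carefully and handle the $y<1$ boundary case so that the constraint $i(e)\geq 0$ does not spoil the calculation; both are elementary but must be spelled out to avoid a subtle off-by-one.
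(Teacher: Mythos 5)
Your proposal is correct and takes the same approach as the paper: decompose $\OPT$ edge by edge, compute the expected ratio of the rounded-up level to the original level, and conclude by linearity of expectation. The paper's two-piece integral $\int_s^1 p^{q-s}\,dq+\int_0^s p^{q-s+1}\,dq$ over the two rounding cases is exactly your $\int_0^1 p^u\,du$ after the change of variables $u=\lceil y-q\rceil-(y-q)$, so your substitution simply packages the paper's two cases into a single integral.
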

\begin{proof}
    Assume that the highest level of a particular edge is equal to $p^{s+i}$ before rounding. Since $s \in (0,1]$, after rounding we will face one of two cases; case 1: $s \leq q$, the highest level of the edge becomes $p^{q+i}$ and case 2: $s>q$, the highest level of the edge becomes $p^{q+i+1}$. Since the lowest level is $p^q$ and $q \in (0,1]$, the expected ratio to rounding cost and optimal cost is:
    $\int_s^1 p^{q-s} dq + \int_0^s p^{q-s+1} dq = \frac{p-1}{\ln p}$.
\end{proof}


We now establish an upper bound for the merging step. We show that this upper bound depends only on $p$, regardless of the value of $q$.

 \begin{lemma}\label{lem:merge}
 If we compute independent solutions of a rounded-up instance and merge them, then the cost of the solution is no more than $\frac{1}{1-1/p} \ \text{weight}(\OPT)$.
 \end{lemma}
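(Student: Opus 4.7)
The plan is to compare the merged solution, block by block, with an optimal solution $\OPT$ on the rounded-up instance. I will first set up notation: for $i \in \{0, 1, \ldots, \ell\}$, let $H_i$ be the independent single-level solution computed for $\partition_{p^{q+i}}$, with weight $h_i$, and let $E_j^*$ denote $\OPT$'s spanner at level $j$, with $w_i^* = \text{weight}(E_{p^{q+i}}^*)$. The key structural observation is that the rounded-up instance has terminals only at the levels $p^{q+i}$, so between two consecutive such levels no new terminal appears; consequently, in both $\OPT$ and the merged solution, the spanner may be assumed constant on each ``block'' of consecutive levels with identical terminal set.

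Next I will write both costs in block form. Each block contributes (block length) $\times$ (constant spanner weight on the block), giving $\text{weight}(\OPT) = p^q w_0^* + \sum_{i=1}^{\ell}(p^{q+i} - p^{q+i-1}) w_i^*$, where the $i{=}0$ block has length $p^q$ and the $i$-th block (for $i\ge 1$) has length $p^{q+i} - p^{q+i-1}$. For the merged solution, the nesting constraint forces the level-$p^{q+i}$ spanner to equal $\bigcup_{j=i}^{\ell} H_j$, of weight at most $\sum_{j\ge i} h_j$ by subadditivity of edge weight over unions. Summing over blocks and switching the order of summation, the coefficient of each $h_m$ telescopes to $p^q + \sum_{i=1}^{m}(p^{q+i} - p^{q+i-1}) = p^{q+m}$, yielding $\text{weight}(\text{merged}) \le \sum_{m=0}^{\ell} p^{q+m} h_m$.

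Finally I will compare the two quantities. Because $H_i$ is an optimal single-level spanner for $\partition_{p^{q+i}}$ while $E_{p^{q+i}}^*$ is merely a feasible one for the same terminal set, $h_i \le w_i^*$. Factoring $p^{q+i} - p^{q+i-1} = (1 - 1/p)p^{q+i}$ in $\text{weight}(\OPT)$ yields $\text{weight}(\OPT) \ge (1 - 1/p) \sum_{i=0}^{\ell} p^{q+i} w_i^*$, and combining with the merged bound closes the proof with ratio $1/(1-1/p)$. The part that requires the most care is the coefficient bookkeeping in the telescoping step; once done, it is exactly what aligns the merged and $\OPT$ level-weighted sums up to the factor $(1-1/p)$, with the rest being a short algebraic manipulation.
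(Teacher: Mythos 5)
Your proof is correct. The substance is the same as the paper's: both rely on (i) the oracle's optimality to get $h_i\le w_i^*$, (ii) bounding the merged cost by the level-weighted sum $\sum_m p^{q+m}h_m$, and (iii) the block inequality $(p^{q+i}-p^{q+i-1})w_i^*\le\sum_{s\in\text{block }i}\text{weight}(\OPT_s)$, which comes from nesting ($\text{weight}(\OPT_s)\ge w_i^*$ for every $s$ in block $i$). What differs is the packaging: the paper proves the bound $\sum_{s\le i} p^{q+s}\,\text{weight}(\ALG_{p^{q+s}})\le\frac{1}{1-1/p}\sum_{s=1}^{p^{q+i}}\text{weight}(\OPT_s)$ by induction on the number of rounded levels, peeling off the top level at each step, whereas you compute the coefficients directly by writing both totals as sums over blocks, switching the order of summation, and observing that the coefficient of $h_m$ telescopes to $p^{q+m}$. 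Your direct route avoids the induction and makes the factor $1-1/p$ appear in one explicit place (factoring $p^{q+i}-p^{q+i-1}=(1-1/p)p^{q+i}$ and using $1-1/p<1$ for the $i=0$ term), which I find somewhat more transparent; the paper's induction packages the same algebra one block at a time. Both are valid, and neither gives anything the other doesn't.
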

 \begin{proof}
 Assume that $k = p^{q+i}$. Let the rounded-up terminal subsets be $\partition_{p^{q+i}}, \partition_{p^{q+i-1}}, $ $ \cdots, \partition_{p^q}$. Suppose we have computed the independent solution and merged them in lower levels. 
 Note that in the worst case the cost of approximation algorithm is $p^{q+i} weight(\ALG_{p^{q+i}}) + p^{q+i-1} weight(\ALG_{p^{q+i-1}}) + \cdots + p^q weight(\ALG_{p^q}) = \sum_{s=0}^i p^{q+s} weight(\ALG_{p^{q+s}})$, and cost of the optimal algorithm is $ weight(\OPT_{p^{q+i}}) + weight(\OPT_{p^{q+i}-1}) + \cdots + weight(\OPT_{1}) = \sum_{s=1}^{p^{q+i}} weight(\OPT_{s})$. We show that $\sum_{s=0}^i p^{q+s} weight(\ALG_{p^{q+s}}) \leq \frac{1}{1-1/p} \sum_{s=1}^{p^{q+i}} weight(\OPT_{s})$. We provide a proof by induction on $i$.
 
 If $i=0$, then we have just one terminal subset $\partition_{p^q}$. The approximation algorithm computes a spanner for $\partition_{p^q}$ and there is nothing to merge. Since the approximation algorithm uses an optimal algorithm to compute independent solutions, weight($\ALG_{p^q}$) $\leq $ weight($\OPT_{p^q}$) $\leq \frac{1}{1-1/p}$ weight($\OPT_{p^q}$) since $p>1$.
 
 We now assume that the claim is true for $i = j-1$ which is the induction hypothesis. Hence $\sum_{s=0}^{j-1} p^{q+s} weight(\ALG_{p^{q+s}}) \leq \frac{1}{1-1/p}\ \sum_{s=1}^{p^{q+j-1}} weight(\OPT_{s})$. We now show that the claim is also true for $i = j$. In other words, we show that $\sum_{s=0}^{j} p^{q+s} weight(\ALG_{p^{q+s}}) \leq \frac{1}{1-1/p}\ \sum_{s=1}^{p^{q+j}}  weight(\OPT_{s})$,  using the facts that an independent optimal solution has a cost lower than or equal to any other solution, and that the input is a rounded up instance. 

 Let $k = p^{q+i}$. Let the rounded-up terminal subsets be $\partition_{p^{q+i}}, \partition_{p^{q+i-1}}, \cdots, \partition_{p^q}$. Suppose we have computed the independent solution and merged them in lower levels. We actually prove a stronger claim, and use that to prove the lemma. Note that in the worst case the cost of approximation algorithm is $p^{q+i} weight(\ALG_{p^{q+i}}) + p^{q+i-1} weight(\ALG_{p^{q+i-1}}) + \cdots + p^q weight(\ALG_{p^q}) = \sum_{s=0}^i p^{q+s} weight(\ALG_{p^{q+s}})$. And the cost of the optimal algorithm is $weight(\OPT_{p^{q+i}}) + weight(\OPT_{p^{q+i}-1}) + \cdots +  weight(\OPT_{1}) = \sum_{s=1}^{p^{q+i}} weight(\OPT_{s})$. We show that $\sum_{s=0}^i p^{q+s} weight(\ALG_{p^{q+s}}) \leq \frac{1}{1-1/p} \sum_{s=1}^{p^{q+i}}  weight(\OPT_{s})$. We provide a proof by induction on $i$.
 
 Base step: If $i=0$, then we have just one terminal subset $\partition_{p^q}$. The approximation algorithm computes a spanner for $\partition_{p^q}$ and there is nothing to merge. Since the approximation algorithm uses an optimal algorithm to compute independent solutions, weight($\ALG_{p^q}$) $\leq $ weight($\OPT_{p^q}$) $\leq \frac{1}{1-1/p}$ weight($\OPT_{p^q}$) since $p>1$.
 
 Inductive step: We assume that the claim is true for $i = j-1$ which is the induction hypothesis. Hence $\sum_{s=0}^{j-1} p^{q+s} weight(\ALG_{p^{q+s}}) \leq \frac{1}{1-1/p}\ \sum_{s=1}^{p^{q+j-1}} weight(\OPT_{s})$. We now show that the claim is also true for $i = j$. In other words, we have to show that $\sum_{s=0}^{j} p^{q+s} weight(\ALG_{p^{q+s}}) \leq \frac{1}{1-1/p}\ \sum_{s=1}^{p^{q+s}} weight(\OPT_{s})$. We know,
 \begin{align*}
\sum_{s=0}^{j} p^{q+s}  & weight(\ALG_{p^{q+s}})\\ &= p^{q+j} weight(\ALG_{p^{q+j}}) + \sum_{s=0}^{j-1} p^{q+s} weight(\ALG_{p^{q+s}}) \\
 &\leq p^{q+j} weight(\OPT_{p^{q+j}}) + \sum_{s=0}^{j-1} p^{q+s} weight(\ALG_{p^{q+s}}) \\
 &= \frac{p^{q+j}}{p^{q+j}-p^{q+j-1}} \times (p^{q+j}-p^{q+j-1}) weight(\OPT_{p^{q+j}}) + \sum_{s=0}^{j-1} p^{q+s} weight(\ALG_{p^{q+s}}) \\
 &= \frac{p^{q+j}}{p^{q+j}-p^{q+j-1}} \sum_{s=p^{q+j-1}}^{p^{q+j}} weight(\OPT_{s}) + \sum_{s=0}^{j-1} p^{q+s} weight(\ALG_{p^{q+s}}) \\
 & \leq \frac{1}{1-1/p} \sum_{s=p^{q+j-1}}^{p^{q+j}} weight(\OPT_{s}) + \frac{1}{1-1/p} \sum_{s=1}^{p^{q+j-1}} weight(\OPT_{s}) \\
 & = \frac{1}{1-1/p}\ \sum_{s=1}^{p^{q+j}}  weight(\OPT_{s}) \\
 \end{align*} 
 
 Here, the second equality is just a simplification. The third inequality uses the fact that an independent optimal solution has a cost lower than or equal to any other solution. The fourth equality is a simplification, the fifth inequality uses the fact that the input is a rounded up instance. The sixth inequality uses the induction hypothesis. 
 \end{proof}

 Lemma~\ref{lem:round-up} and Lemma~\ref{lem:merge} show that Algorithm $k$-level approximation provides a multi-level spanner with cost no more than $\frac{p-1}{\ln p} \frac{1}{1-1/p} weight(OPT) = \frac{p}{\ln p} weight(OPT)$. The minimum value of $\frac{p}{\ln p}$ is $e$ when we put $p=e$.

 \begin{theorem}\label{thm:multi-level}
     Algorithm $k$-level approximation computes a $k$-level spanner with expected weight at most $e \cdot weight(OPT)$.
 \end{theorem}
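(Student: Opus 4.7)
The plan is to combine Lemma~\ref{lem:round-up} and Lemma~\ref{lem:merge} multiplicatively and then optimize over the parameter $p$. First I would note that the algorithm's output weight can be bounded in two stages: (i) the rounding-up step produces a modified instance whose optimum is at most a factor $\frac{p-1}{\ln p}$ larger in expectation than the original optimum (Lemma~\ref{lem:round-up}, where the expectation is taken over the random offset $q \in (0,1]$), and (ii) on any rounded-up instance the merge procedure returns a feasible multi-level spanner of weight at most $\frac{1}{1-1/p}$ times the optimum of that rounded-up instance (Lemma~\ref{lem:merge}). Since the bound in Lemma~\ref{lem:merge} holds deterministically with respect to the rounded instance, I can take expectation and obtain
\begin{equation*}
\mathbb{E}[\text{weight}(\ALG)] \;\le\; \frac{1}{1-1/p}\cdot\mathbb{E}[\text{weight}(\OPT_{\text{rounded}})] \;\le\; \frac{1}{1-1/p}\cdot\frac{p-1}{\ln p}\cdot\text{weight}(\OPT).
\end{equation*}

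Next I would simplify the composed factor: $\frac{1}{1-1/p}\cdot\frac{p-1}{\ln p} = \frac{p}{p-1}\cdot\frac{p-1}{\ln p} = \frac{p}{\ln p}$. Hence for every choice of $p>1$ the expected approximation ratio is $\frac{p}{\ln p}$. The final step is to minimize this expression over $p>1$; differentiating gives $\frac{d}{dp}\frac{p}{\ln p} = \frac{\ln p - 1}{(\ln p)^{2}}$, which vanishes precisely at $\ln p = 1$, i.e., $p=e$, and a quick second-derivative (or sign-change) check confirms this is a minimum. Substituting $p=e$ yields $\frac{e}{\ln e} = e$, establishing the claimed expected $e$-approximation.

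The main obstacle is not algebraic but conceptual: one has to be careful that the two lemmas compose correctly. Specifically, Lemma~\ref{lem:merge} bounds the merge cost against the optimum \emph{of the rounded-up instance}, while Lemma~\ref{lem:round-up} relates the expected cost on the rounded-up instance to $\text{weight}(\OPT)$ on the original instance. Both inequalities must use the same randomness (the offset $q$), and the merge bound must hold pointwise in $q$ so that linearity of expectation can be applied in the displayed inequality above. Once this alignment is made explicit, the remaining calculus optimization is routine.
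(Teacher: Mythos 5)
Your proof is correct and follows essentially the same route as the paper: combine Lemma~\ref{lem:round-up} and Lemma~\ref{lem:merge} multiplicatively to get the factor $\frac{p-1}{\ln p}\cdot\frac{1}{1-1/p}=\frac{p}{\ln p}$, then minimize over $p>1$ to get $e$ at $p=e$. Your explicit remark that the merge bound holds pointwise in $q$ so linearity of expectation applies is a small but welcome clarification that the paper leaves implicit.
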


\section{Conclusion}\label{sec:conclusion}
{\color{red}We provided comparisons between Steiner-lightness and \lightness.}
~We designed subsetwise $+\epsilon W(\cdot, \cdot)$ and $+(4+\epsilon) W(\cdot, \cdot)$ spanners with $O_\epsilon(|S|)$  and $O_\epsilon(|V_H|^{1/3} |S|^{1/3})$ \lightness guarantees, respectively. Both spanners can be constructed deterministically. Using a sampling technique we provided the construction for a subsetwise $+(4+\epsilon) W_{\max}$  spanner. 
We also gave an $e$-approximation algorithms for computing multi-level spanners. It would be interesting if the \lightness bound of our subsetwise spanners can be reduced to sublinear bounds, even for some specific graph classes.

\bibliographystyle{plain}
\bibliography{references}

\newpage

\appendix

\end{document}